\def\paragraph{\@startsection{paragraph}{4}{2\parindent}{-0ex plus -0.1ex minus 
-0.1ex}%
{0ex}{\normalfont\normalsize\bfseries}}%
\newcommand{\eg}{e.g.\xspace}
\newcommand{\ie}{i.e.\xspace}
\newcommand{\citeN}[1]{\citet{#1}}
\theoremstyle{definition}
\newtheorem{defn}{\protect Definition}
\theoremstyle{plain}
\newtheorem*{thm*}{\protect Theorem}
\definecolor{dark-red}{rgb}{0.4,0.15,0.15}
\definecolor{dark-blue}{rgb}{0.15,0.15,0.4}
\definecolor{medium-blue}{rgb}{0,0,0.5}
\newcommand{\bl}{\mathcal{L}}
\newcommand{\bv}{\mathcal{O}}
\newcommand{\be}{\mathcal{E}}
\newcommand{\plotwidth}{.8\columnwidth}
\newcommand{\threeplotwidth}{.32\textwidth}
\newcommand{\thetitle}{PSBS: Practical Size-Based Scheduling}
\DeclarePairedDelimiter\floor{\lfloor}{\rfloor}
\begin{document}

\title{\thetitle}
\author{Matteo~Dell'Amico,~Damiano~Carra,~and~Pietro~Michiardi%
\IEEEcompsocitemizethanks{\IEEEcompsocthanksitem M. Dell'Amico and P. Michiardi are with EURECOM, France.
\IEEEcompsocthanksitem D. Carra is with University of Verona, Italy.
}
}

\markboth{}{Dell'Amico \MakeLowercase{\textit{et al.}}: \thetitle}

\IEEEtitleabstractindextext{%
\begin{abstract}
Size-based schedulers have very desirable performance properties:
optimal or near-optimal response time can be coupled with strong
fairness. Despite this, however, such systems are 
rarely implemented in practical settings, because they require knowing
\textit{a priori} the amount of work needed to complete jobs: this
assumption is difficult to satisfy in concrete systems. It is
definitely more likely to inform the system with an \emph{estimate} of
the job sizes, but existing studies point to somewhat
pessimistic results if size-based policies use imprecise
job size estimations.

We take the goal of designing scheduling policies that 
\emph{explicitly deal with inexact job sizes}. First, we
prove that, in the absence of errors, it is always possible to improve
any scheduling policy by designing a size-based one that
\emph{dominates} it: in the new policy, \emph{no jobs} will complete
later than in the original one. Unfortunately,
size-based schedulers can perform badly with
inexact job size information when
job sizes are heavily skewed; we show that this issue, and the
pessimistic results shown in the literature, are due to problematic
behavior when large jobs are underestimated. Once the problem is
identified, it is possible to amend size-based
schedulers to solve the issue.

We generalize FSP -- a fair and efficient size-based scheduling policy
-- to solve the problem highlighted above; 
in addition, our solution deals with different job weights (that can be 
assigned to a job independently from its size). 
We provide an efficient
implementation of the resulting protocol, which we call
\emph{Practical Size-Based Scheduler} (PSBS).

Through simulations evaluated on synthetic and real workloads, we show
that PSBS has near-optimal performance in a large variety of cases
with inaccurate size information, that it performs fairly and that it
handles job weights correctly. We believe that this work shows
that PSBS is indeed pratical, and we maintain that it could inspire
the design of schedulers in a wide array of real-world use cases.
\end{abstract}

}

\maketitle

\IEEEdisplaynontitleabstractindextext

\IEEEpeerreviewmaketitle





\IEEEraisesectionheading{\section{Introduction}\label{sec:introduction}}

\IEEEPARstart{I}{n} computer systems, several mechanisms can be modeled as queues where
jobs (\eg, batch computations or data transfers) compete
to access a shared resource (\eg, processor or network). In this context,
size-based scheduling protocols, which prioritize jobs that are closest
to completion, are well known to have very desirable properties: the
shortest remaining processing time policy (SRPT) provides optimal mean
response time~\cite{schrage1966queue}, while the fair sojourn protocol
(FSP)~\cite{Friedman2003} provides similar efficiency while
guaranteeing strong fairness properties.

Despite these characteristics, however, scheduling policies similar to
SRPT or FSP are very rarely deployed in production: the \textit{de
facto} standard are size-oblivious policies similar to processor sharing
(PS), which divides resources evenly among jobs in the queue.  A key
reason is that, in real systems, the job size is almost never
known \textit{a priori}. It is, instead, often possible to provide
\emph{estimations} of job size, which may vary in precision depending
on the use case; however, the impact of errors due to these
estimations in realistic scenarios is not yet well understood.

Perhaps surprisingly, very few works tackled the problem of size-based
scheduling with inaccurate job size information: as we discuss more in
depth in Section~\ref{sec:related}, the existing literature gives
somewhat pessimistic results, suggesting that size-based scheduling is
effective only when the error on size estimation is small; known
analytical results depend on restrictive assumptions on size
estimations, while simulation-based analyses only cover a limited
family of workloads. More importantly, no study we are aware of
tackled the design of size-based schedulers that are
\emph{explicitly designed with the goal of coping with errors} in job
size information. Our endeavor is to create a
\emph{practical} size-based scheduling protocol, that has an efficient
implementation and handles imprecise size information.
In addition, the scheduler should allow setting \emph{weights} to jobs, to 
control the relative proportion of the resources assigned to them.  

In Section~\ref{sec:analytic}, we provide a proof that it is possible
to improve \emph{any} size-oblivious policy by simulating that policy
and running jobs sequentially in the order in which they complete in
the simulated policy. The resulting policy \emph{dominates} the
latter: \emph{no job} will complete later due to the policy
change. This result generalizes the known fact that FSP dominates
PS~\cite{Friedman2003} and gives strong fairness guarantees, but it
does not hold when job size information is not exact.

In Section~\ref{sec:errors}, we give a qualitative analysis of the
impact of size estimation errors on scheduling behavior: we
show that, for heavy-tailed job size distributions, size-based policies can
behave problematically when large jobs are under-estimated: this
phenomenon, indeed, explains the pessimistic results
observed in previous works.

Fortunately, it is possible to solve
the aforementioned problem: in Section~\ref{sec:solution}, we propose a scheduling protocol that drastically improves the
behavior of disciplines such as FSP and SRPT when estimation errors exist.
Our approach, which we call PSBS (Practical Size-Based Scheduler), is a generalization of FSP featuring an efficient $O(\log n)$ implementation and support for job weights.

We developed a simulator, described in Section~\ref{sec:simulator}, to
study the behavior of size-based and size-oblivious scheduling policies
in a wide variety of scenarios. Our simulator allows both replaying real traces and generating synthetic ones varying system load, job size distribution
and inter-arrival time distribution; for both synthetic and real
workloads, scheduling protocols are evaluated on errors that range
between relatively small quantities and others that may vary even by
orders of magnitude. The simulator is released as open-source software,
to help reproducibility of our results and to facilitate further
experimentation.

From the experimental results of
Section~\ref{sec:experimental_results}, we highlight the
following, validated both on synthetic and real traces:
\begin{enumerate}
  \item When job size is not heavily skewed, SRPT and FSP outperform
    size-oblivious disciplines even when job size estimation is very
    imprecise, albeit past work would hint towards important performance
    degradation; on the other hand, when the job size distribution is
    heavy-tailed, performance degrades noticeably;
  \item The scheduling disciplines we propose (from which we derive PSBS) 
    do not suffer from the
    performance issues of FSP and
    SRPT; they provide good performance for a large part of the
    parameter space that we explore, being outperformed by a processor
    sharing strategy only when \emph{both} the job size distribution is heavily
    skewed \emph{and} size estimations are very inaccurate;
  \item PSBS handles job weights correctly and behaves fairly, guaranteeing 
    that most jobs complete
    in an amount of time that is proportional to their size.
\end{enumerate}

As we discuss in Section~\ref{sec:conclusion}, we conclude that our
work highlights and solves a key weakness of size-based scheduling
protocols when size estimation errors are present; the fact that
PSBS consistently performs close to optimally highlights that
size-based schedulers are more viable in real systems than what was
known from the state of the art; we believe that our work can help
inspiring both the design of new size-based schedulers for real
systems and analytic research that can provide better insight on
scheduling when errors are present.

\section{Related Work}
\label{sec:related}

We discuss two main areas of related work: first, results for
size-based scheduling on single-server queues when job sizes are known 
only approximately; second,
practical approaches devoted to the estimation of job sizes.

\subsection{Single-Server Queues}
\label{sec:related-ssq}

Performance evaluation of scheduling policies in single-server queues
has been the subject of many studies in the last 40 years.  Most of
these works, however, focus on extreme situations: the size of
a given job is either \emph{completely unknown} or \emph{known
perfectly}. In the first (\emph{size-oblivious}) case,
smart scheduling choices can still be taken by considering the overall
job size distribution: for example, in the common case where job sizes
are skewed -- \ie, a small percentage of jobs are responsible for most
work performed in the system -- it is smart to give
priority to younger jobs, because they are likely to complete
faster. Least-Attained-Service (LAS)~\cite{rai2003analysis}, also
known in the literature as \emph{Foreground-Background}
(FB)~\cite{kleinrock1975theory} and \emph{Shortest Elapsed Time}
(SET)~\cite{coffman1973operating}, employs this principle. Similar
principles guide the design of multi-level
queues~\cite{kleinrock1976queueing,guo2002scheduling}.

When job size is known \emph{a priori}, scheduling policies taking
into account this information are well known to perform better (\eg,
obtain shorter response times) than size-oblivious
ones. Unfortunately, job sizes can often be only known approximately,
rather than exactly.  Since in our paper we consider this case, we
review the literature that targets this problem.

Perhaps due to the difficulty of providing analytical results, not much work considers the effect of inexact
job size information on size-based scheduling. 
\citeN{lu2004size} have been the first to consider this problem,
showing that size-based scheduling is useful only when job size
evaluations are reasonably good (high correlation, greater than 0.75,
between the real job size and its estimate). Their evaluation focuses
on a single heavy-tailed job size distribution, and does
  not explain the causes of the observed results. Instead, we show
the effect of different job size distributions (heavy-tailed,
memoryless and light-tailed), and we show how to modify the size-based
scheduling policies to make them robust to job estimation errors.

\citeN{wierman2008scheduling} provide analytical
results for a class of size-based policies, but consider an
impractical assumption: results depend on a bound on the estimation error.
In the common case where most estimations are close to the real value
but there are outliers, bounds need to be set according to outliers,
leading to pessimistic predictions on performance. In our work,
instead, we do not impose any bound on the error.
Semi-clairvoyant scheduling~\cite{Bender2002,Becchetti2004} is the
problem where the scheduler, rather than knowing precisely a job's
size $s$, knows its size class $\floor{\log_2\left(s\right)}$. It can
be regarded as similar to the bounded error case.

Other works examined the effect of imprecise size information in
size-based schedulers for web servers~\cite{harchol2003size} and
MapReduce~\cite{chang2011scheduling}. In both cases, these are
simulation results that are ancillary to the proposal of a scheduler
implementation for a given system, and they are limited to a single
type of workload.

To the best of our knowledge, these are the only works targeting job
size estimation errors in size-based scheduling. We remark that, by
using an experimental approach and replaying traces, we can take into
account phenomena that are difficult to consider in analytic approaches,
such as periodic temporal patterns or correlations
between job size and submission time.

\subsection{Job Size Estimation}

In the context of distributed 
  systems,
FLEX~\cite{wolf2010flex} and HFSP~\cite{pastorelli2013hfsp} proved
that size-based scheduling can perform well in practical
scenarios. In both cases, job size estimation is
  performed with very simple approaches (\ie, by sampling the execution
  time of a part of the job): such rough estimates are sufficient to
  provide good performance, and our results provide an explanation to this.

In several practical contexts, rough job size
  estimations are easy to perform. For instance, web servers can use
file size as an estimator of job
size~\cite{schroeder2006web}, and the variability of the end-to-end
transmission bandwidth determines the estimation error.
More elaborate ways to estimate size
  are often available, since job size
  estimation is useful in many domains; examples
  are approaches that deal with predicting the size of MapReduce
  jobs~\cite{ARIA11,nsdi12-c,query_perf} and of database
  queries~\cite{lipton1995query}. Estimation error can be always
evaluated \textit{a posteriori}, and this evaluation can be used to
decide if size-based scheduling works better than size-oblivious policies.

\section{\label{sec:analytic}Dominance Results With Known Job Sizes}

Friedman and Henderson have proven that FSP -- a policy
that executes jobs serially in the order in which they complete in PS
-- dominates PS: when job sizes are known exactly, no jobs complete
later in PS than in FSP~\cite{Friedman2003}. This is a strong fairness
guarantee, but in most practical cases a policy such as FSP falls short because
of its lack of configurability: for example, it does not allow to prioritize
jobs.
We show here that Friedman and Henderson's
results can be generalized: no matter what the original scheduling
policy is, it is possible to simulate it and execute jobs in the order
of their completion: the resulting policy will still dominate it. Our
PSBS policy, described in Section~\ref{sec:solution}, is an instance
of this set of policies which allows setting job priorities.

We consider here the single-machine scheduling problem with release
times and preemption. In this section, we consider the \emph{offline}
scheduling problem, where release times and sizes of each job are
known in advance. As we shall see in the following, PSBS (like FSP)
guarantees these dominance results while also being appliable
\emph{online}, \ie, without any information about jobs released in the
future.

Our goal, that materializes in the $\mathrm{Pri}$ 
scheduler, is to minimize the sum of 
completion times (using Graham et al.'s notation~\cite{
Graham1979}, the $1|r_{i};pmtn|\sum C_{i}$ problem) with the 
additional dominance requirement: no job should complete later than 
in a scheduler which is taken as a reference for fairness. Without 
this limitation, the optimal solution is the Shortest Remaining 
Processing Time (SRPT) policy. We call \emph{
schedule }a function $\omega\left(i,t\right)$
that outputs the fraction of system resources allocated to job $i$
at time $t$. For example, for the processor-sharing (PS) scheduler,
when $n$ jobs are \emph{pending }(released and not yet completed),
$\omega\left(i,t\right)=\frac{1}{n}$ if job $i$ is pending and $0$
otherwise. Furthermore, we call $C_{i,\omega}$ the completion time
of job $i$ under schedule $\omega$.
\begin{defn}
Schedule $\omega$ \emph{dominates} schedule $\omega'$ if $C_{i,\omega}\leq C_{i,\omega'}$
for each job $i$.
\end{defn}
Our scheduler prioritizes jobs according to the order in which they
complete in $\omega$: its \emph{completion sequence}.
\begin{defn}
A \emph{completion sequence $S=\left[s_{1},\ldots,s_{n}\right]$ }is
an ordering of the jobs to be scheduled. A schedule $\omega$ \emph{has
completion sequence} $S$ if $C_{s_{i},\omega}\leq C_{s_{j},\omega}\forall i<j$.
\end{defn}

\begin{defn}
For a completion sequence $S$, the $\mathrm{Pri}_{S}$ schedule is
such that $\mathrm{Pri}_{S}\left(i,t\right)=1$ if $i$ is the first
pending job to appear in $S$; $\mathrm{Pri}_{S}\left(i,t\right)=0$
otherwise.
\end{defn}
We now show that scheduling jobs in the order in which they complete
under $\omega'$ dominates $\omega$.
\begin{thm*}
$\mathrm{Pri}_{S}$ dominates any schedule with completion sequence
$S$.\end{thm*}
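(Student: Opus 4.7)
The plan is to fix an arbitrary index $k$ and show that $s_k$ completes no later under $\mathrm{Pri}_S$ than under $\omega$. Let $A_k = \{s_1, \ldots, s_k\}$ and, for any schedule $\pi$, let $W_\pi(t)$ denote the total work done on jobs in $A_k$ by time $t$; let $Q(t)$ be the total size of jobs in $A_k$ that have been released by time $t$. Set $T_k = C_{s_k,\omega}$. Since $S$ is the completion sequence of $\omega$, every job in $A_k$ has completed under $\omega$ by time $T_k$, so $W_\omega(T_k) = Q(T_k)$. Because $W_{\mathrm{Pri}_S}(t) \leq Q(t)$ always, establishing the inequality $W_{\mathrm{Pri}_S}(t) \geq W_\omega(t)$ for every $t$ forces $W_{\mathrm{Pri}_S}(T_k) = Q(T_k)$; that is, $\mathrm{Pri}_S$ has also completed every job in $A_k$ by time $T_k$, and in particular $C_{s_k, \mathrm{Pri}_S} \leq T_k$.

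To prove the pointwise dominance $W_{\mathrm{Pri}_S}(t) \geq W_\omega(t)$, I would compare the two cumulative curves through their derivatives, noting that release events increment $Q$ identically for both schedules and hence leave the difference $W_{\mathrm{Pri}_S} - W_\omega$ unchanged. The key structural observation is: whenever $W_{\mathrm{Pri}_S}(t) < Q(t)$, some job in $A_k$ is released and still pending under $\mathrm{Pri}_S$, and by the definition of $\mathrm{Pri}_S$ the job it processes at time $t$ is the earliest pending job in $S$, which therefore has index at most $k$ and belongs to $A_k$. Hence in this regime the derivative $W'_{\mathrm{Pri}_S}(t) = 1$ dominates $W'_\omega(t) \leq 1$, so the gap $W_{\mathrm{Pri}_S} - W_\omega$ is non-decreasing. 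In the complementary regime $W_{\mathrm{Pri}_S}(t) = Q(t)$ the desired inequality holds trivially because $W_\omega(t) \leq Q(t)$.

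A standard first-passage argument then closes the proof: if $W_{\mathrm{Pri}_S}(t) < W_\omega(t)$ for some $t$, let $t^*$ be the infimum of such times; by continuity $W_{\mathrm{Pri}_S}(t^*) = W_\omega(t^*)$, and immediately to the right of $t^*$ one of the two regimes above must apply, neither of which allows a strict decrease in the gap, a contradiction. I expect the main obstacle to be tidying the analysis at the boundary between regimes, specifically at instants where $\mathrm{Pri}_S$ transitions between processing an $A_k$-job and a job with index greater than $k$. Such transitions can only occur when $W_{\mathrm{Pri}_S}$ has just caught up to $Q$, so the bound $W_\omega \leq Q$ supplies the necessary slack; one just has to verify that the two cases cover the time axis up to a null set and that the release-time jumps are handled symmetrically, both of which are routine bookkeeping.
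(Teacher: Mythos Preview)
Your argument is correct and rests on the same decomposition as the paper: fix the prefix set $A_k=\{s_1,\ldots,s_k\}$ and show that $\mathrm{Pri}_S$ finishes every job in $A_k$ no later than $\omega$ does. The paper obtains this in two lines by introducing the minimum makespan $M$ of $A_k$ and citing the classical result that any schedule which devotes full capacity to $A_k$ whenever an $A_k$-job is pending achieves $M$ (the $1\mid r_i;pmtn\mid C_{\max}$ result of Liu, 1973); since $\mathrm{Pri}_S$ has this property, $C_{s_k,\mathrm{Pri}_S}\le M\le C_{s_k,\omega}$. Your work-function comparison $W_{\mathrm{Pri}_S}(t)\ge W_\omega(t)$ is precisely the standard proof of that makespan fact, carried out from scratch. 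So the route is the same; you have simply unpacked the cited lemma into an elementary argument, trading brevity for self-containment. The boundary bookkeeping you flag is indeed routine: $D=W_{\mathrm{Pri}_S}-W_\omega$ is Lipschitz, equals $0$ at $t=0$, is nonnegative whenever $W_{\mathrm{Pri}_S}=Q$, and has $D'\ge 0$ a.e.\ on the complementary set, which suffices.
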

\begin{proof}
We have to show that $C_{i,\mathrm{Pri}_{S}}\leq C_{i,\omega}$ for
each job $i$ and any schedule $\omega$ with completion sequence
$S$. Let $j$ be the position of $i$ in $S$ (i.e., $i=s_{j}$); we call $M$ the minimal makespan of the $S_{\leq j}=\left\{ s_{1},\ldots,s_{j}\right\} $
set of jobs,%
\footnote{The \emph{makespan} of a set of jobs is the maximum among their completion
times, therefore $M=\min_{\omega\in\Omega}\max_{i\in\left\{ 1,\ldots,j\right\} }C_{S_{i},\omega}$
where $\Omega$ is the set of all possible schedules.%
} and we show that $C_{i,\mathrm{Pri}_{S}}\leq M$ and $M\leq C_{i,\omega}$:
\begin{itemize}
\item $C_{i,\mathrm{Pri}_{S}}\leq M$: minimizing the makespan of $S_{\leq j}$
is equivalent to solving the $1|r_{i};pmtn|C_{\max}$ problem applied
to the jobs in $S_{\leq j}$: this is guaranteed if all resources
are assigned to jobs in $S_{\leq j}$ as long as any of them are pending
\cite{Liu1973}. $\mathrm{Pri}_{S}$ guarantees this, hence the makespan
of $S_{\leq j}$ using $\mathrm{Pri}_{S}$ is $M$. Since $i\in S_{\leq j}$,
$C_{i,\mathrm{Pri}_{S}}\leq M$.
\item $M\leq C_{i,\omega}$ follows trivially from $\omega$
having completion sequence $S$ and, therefore, $C_{i,\omega}$ being the
makespan for $S_{\leq j}$ using schedule $\omega$.
\qedhere
\end{itemize}
\end{proof}
This theorem generalizes Friedman and Henderson's
results: FSP follows from applying $\mathrm{Pri}_{S}$
to the completion sequence of PS.
The generalization is important: in practice, one can define 
a scheduler that provides a desired type of fairness, and optimize
the performance in terms of completion time by applying the $\mathrm{Pri}_{S}$
scheduler. If the system deals with different 
classes of jobs that have different weights, we can take discriminatory
processor sharing (DPS) as a reference: our theorem guarantees that 
$\mathrm{Pri}_{S}$ dominates DPS. We have exploited
exactly this results in our PSBS scheduler, which, 
in the absence of errors, dominates DPS. Only when errors are present -- and
this dominance result does not apply -- PSBS deviates from the behavior of
$\mathrm{Pri}_{S}$.


\section{Scheduling Based on Estimated Sizes}
\label{sec:errors}

We now describe the effects that estimation errors have on existing
size-based policies such as SRPT and FSP.
We notice that under-estimation triggers a behavior
which is problematic for heavy-tailed job size
distributions: this is the key insight that will lead to the design of
PSBS.

\subsection{SRPT and FSP}
\label{sec:srpt_fsp}

SRPT gives priority to the job with smallest remaining
processing time. It is \emph{preemptive}: a new job with size
smaller than the remaining processing time of the running one will
preempt (\ie, interrupt) the latter. When the scheduler
has access to exact job sizes, SRPT has optimal mean sojourn time
(MST)~\cite{schrage1966queue} -- \emph{sojourn time}, or \emph{response
  time}, is the time that passes between a job's submission and its
completion.

SRPT may cause \emph{starvation} (\ie, never providing access to
resources): for example, if small jobs are constantly submitted, large
jobs may never get served; while this phenomenon appears rare in
practical cases~\cite{Bansal:2001:ASS:384268.378792}, it is
nevertheless worrying. FSP (also known as
\emph{fair queuing}~\cite{fair_queuing} and
\emph{Vifi}~\cite{gorinsky2007fair}) doesn't suffer
from starvation by virtue of \emph{job aging}: 
FSP serves the job that would complete earlier in a
\emph{virtual} emulated system running a processor sharing (PS)
discipline: since all jobs eventually complete in the virtual system,
they will also eventually be scheduled in the real one.

With no estimation errors, FSP provides a value of MST which is close
to what is provided by SRPT while guaranteeing fairness due to the
dominance result discussed in Section~\ref{sec:analytic}.
When errors are present, this
property is not guaranteed; however, our results in
Section~\ref{sec:fairness} show that FSP preserves better
fairness than SRPT also in this case.

\subsection{Dealing With Errors: SRPTE and FSPE}
\label{sec:under_over}

We now consider SRPT and FSP when the
  scheduler uses \emph{estimated} job sizes rather than exact
  ones. For clarity, we will refer hereinafter to \emph{SRPTE} and
  \emph{FSPE} in this case.

\begin{figure}[!t]
  \centering
  \includegraphics[width=\columnwidth]{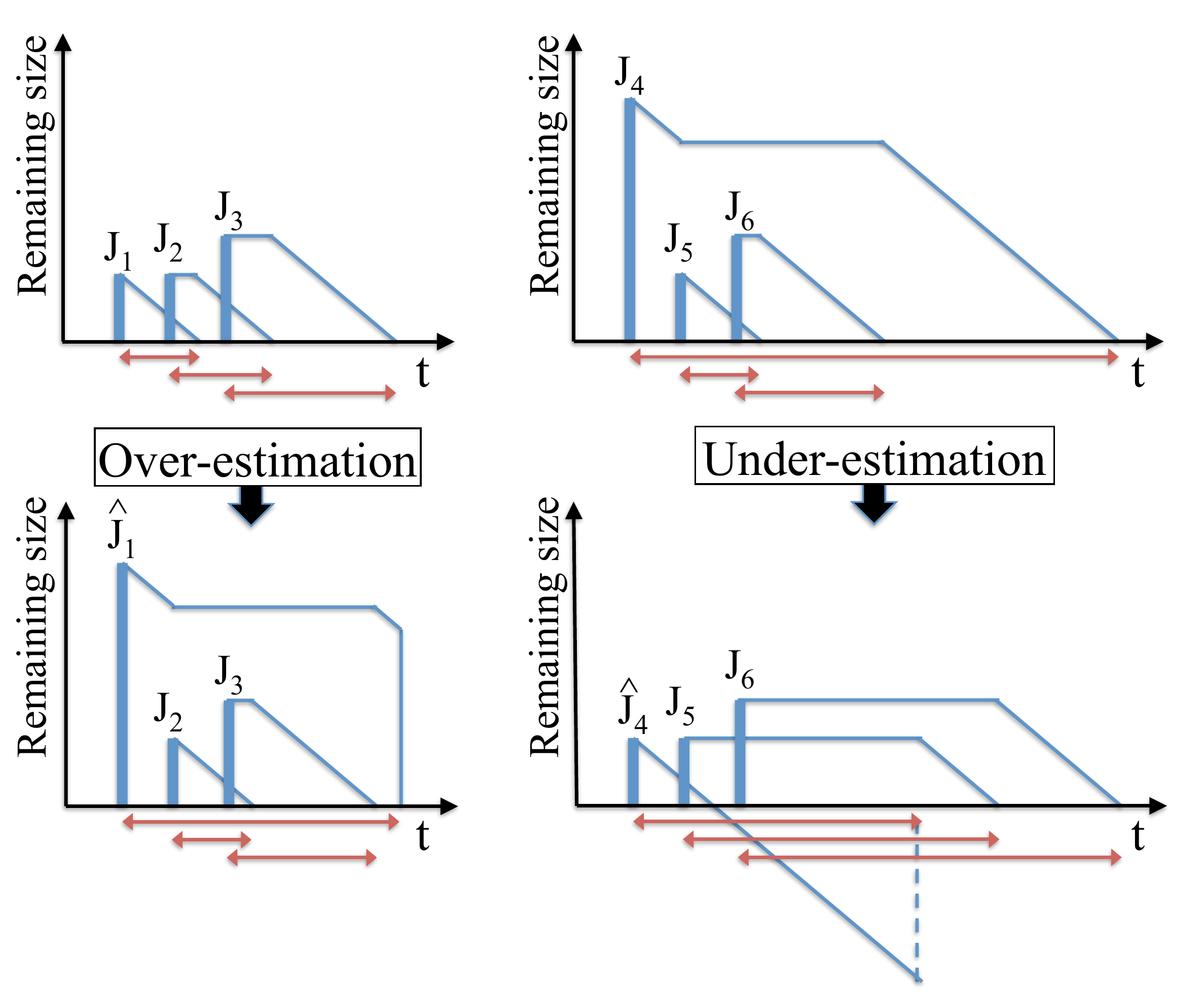}
  \caption{Examples of scheduling without (top) and with (bottom) errors.}
  \label{fig:under_over}
\end{figure}

In Fig.~\ref{fig:under_over}, we provide an illustrative example
where a single job size is over- or under-estimated while the others
are estimated correctly, focusing (because of its simplicity) on
SRPTE; sojourn times are represented by the horizontal arrows. The
left column of Fig.~\ref{fig:under_over} illustrates the effect of
over-estimation. In the top, we show how the scheduler behaves without
errors, while in the bottom we show what happens when the size of job
$J_1$ is over-estimated. The graphs shows the remaining (estimated)
processing time of the jobs over time, assuming a normalized service
rate of 1. Without errors, $J_2$ does not preempt $J_1$, and $J_3$
does not preempt $J_2$.  Instead, when the size of $J_1$ is
over-estimated, both $J_2$ and $J_3$ preempt $J_1$. Therefore, the only
penalized job (\ie, experiencing higher sojourn time) is the over-estimated one.
Jobs with smaller sizes are always able to
preempt an over-estimated job, therefore the basic property of SRPT
(favoring small jobs) is not significantly compromised.

The right column of Fig.~\ref{fig:under_over} illustrates the effect
of under-estimation. With no estimation
errors (top), a large job, $J_4$, is preempted by small ones ($J_5$ and
$J_6$). If the size of the large job is under-estimated (bottom), its
estimated remaining processing time eventually reaches zero: we call
\emph{late} a job with zero or negative estimated remaining processing
time. \emph{A late job cannot be preempted} by newly arrived jobs,
since their size estimation will always be larger than zero. In
practice, since preemption is inhibited, the under-estimated job
\emph{monopolizes the system} until its completion, impacting negatively all
 waiting jobs.

This phenomenon is particularly harmful with heavily skewed job sizes, if estimation errors are proportional to size:
if there are few very large jobs and many
  small ones, a single late large job can significantly delay several
  small ones, which will need to wait for the late job to complete for an amount of time which is disproportionate to their size
  before having an opportunity of being served.

Even if the impact of under-estimation seems straightforward to
understand, surprisingly \emph{no work in the literature has ever
discussed it}. To the best of our knowledge, we are the
first to identify this problem, which significantly influences
scheduling policies dealing with inaccurate job size.

In FSPE, the phenomena we observe are analogous: job
  size over-estimation delays only the over-estimated job;
  under-estimation can result in jobs terminating in the virtual PS
  queue before than in the real system; this is impossible in absence
  of errors due to the dominance result introduced in
  Section~\ref{sec:srpt_fsp}. We therefore define \emph{late} jobs in
  FSPE as those whose execution is completed in the virtual system but
  not yet in the real one and we notice that, analogously to SRPTE, also in
  FSPE late jobs can never be preempted by new ones, and they block
  the system until they are all completed.

\section{Our Solution}
\label{sec:solution}


Now that we have identified the issue with existing size-based
scheduling policies, we propose a strategy to avoid it.  It is
possible to envision strategies that update job size estimations as
work progresses in an effort to reduce errors; such
solutions, however, increase the complexity both in designing systems
and in analyzing them. In fact, the effectiveness of such a solution
would depend non-trivially on the way size estimation errors evolve
as jobs progress: this is inextricably tied to the way estimators are
implemented and to the application use case.
We propose, instead, a solution that requires no additional job
size estimation, based on the intuition that \emph{late jobs should
  not prevent executing other ones}. This goal is achievable
with simple modifications to preemptive size-based scheduling
disciplines such as SRPT and FSP; the key property is that the
scheduler takes corrective actions when one or more jobs are
\emph{late}, guaranteeing that newly arrived small jobs will execute soon
even when very large late jobs are running.

We conclude this section by showing our proposal, PSBS; it implements
this idea while being efficient ($O(\log n)$ complexity) and allowing
the usage of different weights to differentiate jobs. Our experimental
results show that PSBS achieves almost optimal mean sojourn times for
a large variety of workloads, suggesting that more complex solutions
involving re-estimations are unlikely to be very beneficial in many
practical cases.

\subsection{Using PS and LAS for Late Jobs}
\label{sec:unweighted-proposal}

From our analysis of Section~\ref{sec:under_over}, we understand that
current size-based schedulers behave problematically when one or more
jobs become late. Fortunately, it is possible to understand if jobs
are late from the internal state of the scheduler: in SRPT, a job is
late if its remaining estimated size is less than or equal to zero; in FSP, a job is late
if it is completed in the virtual time but not in the real time.

%


As outlined above, approaches that involve job size re-estimation are
difficult to design and evaluate, especially from the point of view of this work, where 
we do not make any assumption on the job size estimators; our
approach, therefore, requires only \emph{one} size estimation per job.

The key idea
of our proposal is that late jobs
should not monopolize the system resources. The solution is to modify the 
scheduler such that it provides service to a set of jobs, which we call
\emph{eligible jobs}, rather than a single job at a time. In particular,
we consider the following jobs as eligible when at least one job is late:
\begin{inparaitem}[]
\item for our amended version of SRPTE, all the
  late jobs, plus the non-late job with the highest-priority;
\item for our amended version of FSPE, \emph{only the
  late jobs}.
\end{inparaitem}

The two cases differ because, in
SRPTE, jobs only become late while they are being served since
remaining processing time decreases only for them;
therefore, non-late jobs need a chance to be served. We serve only one
non-late job to minimize unnecessary deviations from SRPTE. In FSPE,
conversely, jobs become late depending on the simulated behavior of
the virtual time, independently from which jobs are served in the real
time. 


We take into account two choices for scheduling eligible jobs: PS and
LAS (see Section~\ref{sec:related-ssq}). PS divides resources evenly
between all jobs, while LAS divides resources evenly
between the job(s) that received the least amount of service until the
current time.

The alternatives proposed so far lead to four scheduling policies that
we evaluate experimentally in Section~\ref{sec:experimental_results}:
\begin{enumerate}
\item \emph{SRPTE+PS.} Behaving as SRPTE as long as no jobs are late,
  switching to PS between all late jobs and the highest-priority
  non-late job;
\item \emph{SRPTE+LAS.} As above, but using LAS instead of PS;
\item \emph{FSPE+PS.} Behaving as FSPE as long as no jobs are late,
  switching to PS between all late jobs;
\item \emph{FSPE+LAS.} As above, but using LAS instead of PS.
\end{enumerate}

We point out that, in the absence of errors or just of size
underestimations, jobs are guaranteed to be never late; this means
that in such cases these scheduling policies will be equivalent to
SRPT(E) and FSP(E), respectively. For a more precise description, we point
the interested reader to their implementation in our
simulator.\footnote{\url{https://github.com/bigfootproject/schedsim/blob/4745b4b581029c4f9cbbb791f43386d32d0ef8f6/schedulers.py}}

\subsection{PSBS}
\label{sec:psbs}

In Section~\ref{sec:mst_against_ps} we show how the scheduling
protocols we propose outperform, in most cases, both existing
size-based scheduling policies and size-oblivious ones such as PS and
LAS. Betweeen the scheduling protocols just introduced, we point out
that FSPE+PS is the only one that guarantees to \emph{avoid
  starvation}: every job will eventually complete in the
virtual time, and therefore will be scheduled in a PS
fashion. Conversely, both SRPTE and LAS can starve large jobs if
smaller ones are continuously submitted.
Due to this property and to the good performance we observe in the
experiments of Section~\ref{sec:comparing_proposals}, we consider
FSPE+PS a desirable policy. It has, however, a few
shortcomings: first, it does not handle
weights to differentiate job priorities; 
second, its implementation is inefficient, requiring
$O(n)$ computation where $n$ is the number of jobs running in the emulated
system. Here, we propose PSBS, a generalization of
FSPE+PS which solves these problems, both allowing different
job weights and having an efficient $O(\log n)$ implementation.

\subsubsection{Job Weights}

Neither FSP nor PS support job differentiation through job weights. 
In particular, FSP
schedules jobs based on their completion time in a virtual time that
simulates an environment using PS, which treats all running
jobs equally.

To differentiate jobs in PSBS, we use Discriminatory Processor
Sharing (DPS)~\cite{aalto2007beyond} in the place of PS,
\emph{both in the virtual time and in the scheduling for late
  jobs}. DPS is a generalization of PS whereby each job is given a
weight, and resources are shared between processes proportionally to
their weight. By assigning a different weight to jobs, 
we can therefore prioritize important
jobs. When all weights are the same, DPS is equivalent to PS;
when each job has the same weight 
PSBS is equivalent to FSPE+PS.

Our handling of weights follows classic algorithms like
Weighted Fair Queuing (WFQ) and Weighted Round Robin (WRR),
accelerating aging (\ie, the decrease of
virtual size) proportionally to weight, so that jobs with higher
weight are scheduled earlier.
Our result from Section~\ref{sec:analytic} guarantees that
PSBS dominates DPS if job sizes are known exactly, while at the same time
being an \emph{online} scheduler, implemented without any knowledge of jobs
released in the future.

\subsubsection{Implementation}

FSP emulates a virtual system running a processor sharing (PS) discipline and keeps track of its job completion order; FSP then schedules one job at a time following that order. Whenever a new job arrives, FSP needs to update the remaining size of \emph{each} job in the emulated system to compute the new virtual finish times and the corresponding job completion order. Existing implementations of FSP \cite{Friedman2003,dell2014revisiting} have $O(n)$ complexity due to the job virtual remaining size update at each arrival.

In our implementation, we reduce the complexity of the update procedure. Before showing the details of the algorithm, we introduce an example to help understand our solution. Consider three jobs ($J_1$, $J_2$ and $J_3$) with sizes $s_1 = 10$, $s_2 = 5$ and $s_3 = 2$ respectively, weights $w_1 = w_2 = w_3 = 1$, which arrive at times $t = 0$, $t = 3$ and $t = 5$ respectively. Fig.~\ref{fig:virtual_finish_time} shows the evolution of the virtual emulated system, \ie, how the remaining virtual size decreases in the virtual time. For instance, when job $J_3$ arrives, since it will complete in the virtual time before jobs $J_1$ and $J_2$, it will be executed immediately in the real system (job $J_2$ will be preempted). To compute the completion time, it is possible to calculate the exact virtual remaining size of the jobs currently in the system.

\begin{figure}[!t]
  \centering
  \includegraphics[width=\columnwidth]{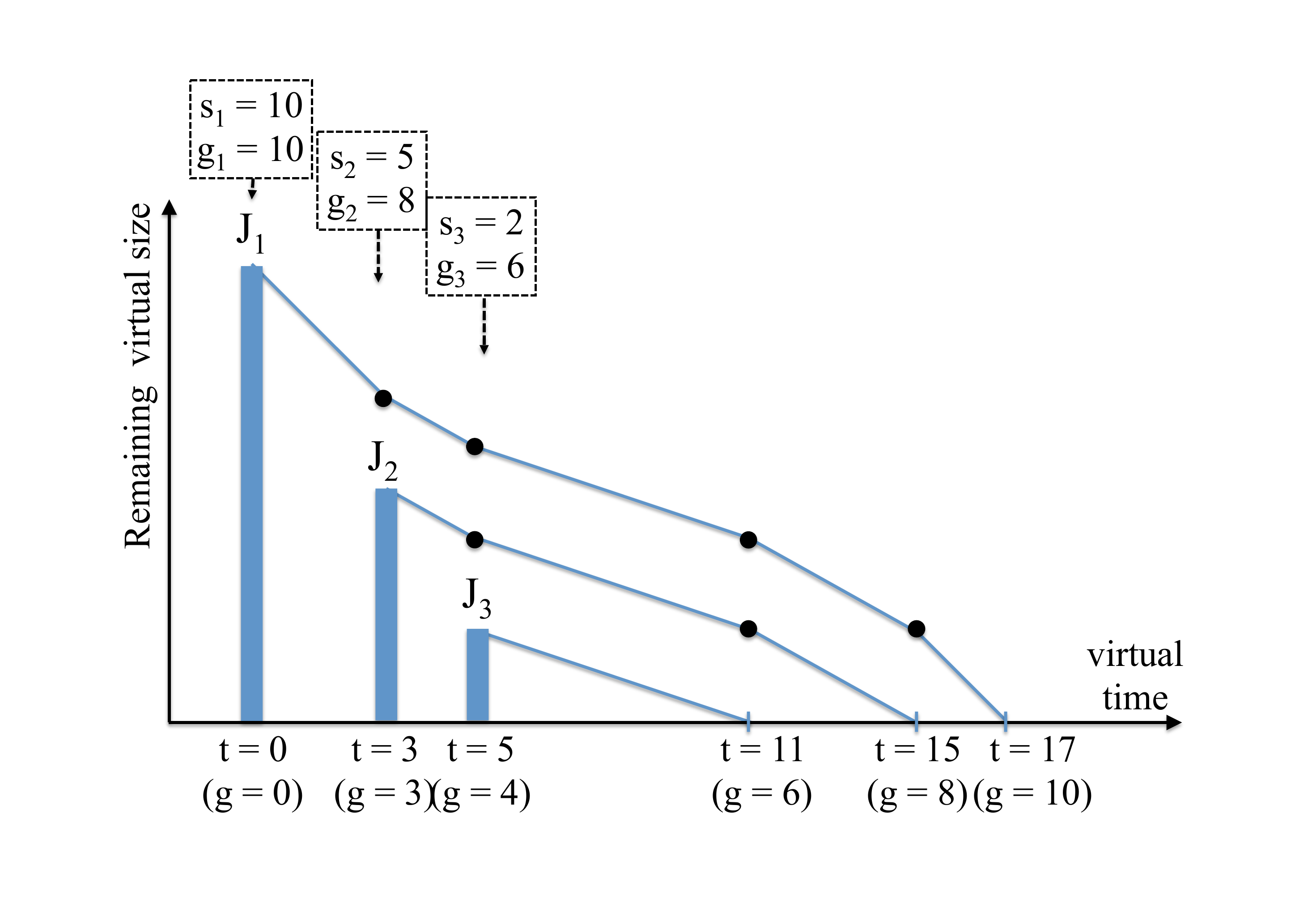}
  \caption{Example of virtual time $t$ and virtual lag $g$, used for sorting jobs.}
  \label{fig:virtual_finish_time}
\end{figure}

We instead introduce a new variable, which we call \emph{virtual lag} $g$. The key idea is that we store, for each job $i$, a job virtual lag $g_i$ so that \emph{$i$ completes in the virtual time when the virtual lag $g=g_i$}. We fulfill this property by updating $g$ at a rate that depends on the number of jobs in the system: for each time unit in the virtual time, $g$ increases by $1/w_v$, where $w_v$ is the sum of the weights $w_i$ of each job $i$ running in the virtual emulated system. 

Given a job $i$ with weight $w_i$ and size $s_i$ that arrives at the system when the virtual lag $g$ has a value $g = x$, the job virtual lag is given by $g_i=x+s_i/w_i$. The job virtual lag $g_i$ is computed just once, when the job arrives, and it does not need to be updated when other jobs arrive. In fact, only the global virtual lag $g$ needs to be updated according to the number of jobs in the system. Fig.~\ref{fig:virtual_finish_time} shows the job virtual lag computed when each jobs arrives (value of $g_i$ below $s_i$) and the value of the global virtual lag $g$ (below the virtual time $t$). Indeed, each job $i$ completes when $g = g_i$, but the only variable we update at each job arrival is $g$, leaving untouched the values $g_i$ of the job virtual lags. For instance, when job $J_3$ arrives, the virtual lag $g$ has value 4, therefore the job virtual lag will be $g_3 = 4 + 2$ (since the size and the weights of job $J_3$ are $s_3 = 2$ and $w_3 = 1$). It takes 6 time units (in the virtual time) to complete job $J_3$, which corresponds to 2 time units in the virtual lag. 

It is simple to show that, given any positive value for $s_i$ and $w_i$, the order at which jobs complete in the virtual time and in the virtual lag is exactly the same. Therefore, at each job arrival, it is sufficient to update the global virtual lag $g$, compute the job virtual lag $g_i$ and store the object in a priority queue, where the order is kept according to the values of $g_i$. The overall complexity is dominated by the maintenance of the priority queue, which is $O(\log n)$, since it is not necessary to update the virtual remaining size of all jobs in the system to compute the completion order.

The implementation of our solution, 
shown in Algorithm~\ref{alg:psbs}, follows the
nomenclature used in the original description of
FSP~\cite[Section~4.4]{Friedman2003}. We remark that, in the
absence of errors and when all job weights are the same, PSBS is
equivalent to FSP: therefore, our implementation of PSBS is also the
first $O(\log n)$ implementation of FSP.

\begin{algorithm}[!t]

  \tcc{Set up the scheduler state.\newline $\bv$ and $\be$ contain
    $(i, g_i, w_i)$ tuples: $i$ is the job id, $g_i$ is the value of
    $g$ at which the job completes in the virtual time and $w_i$ is
    the weight. They are sorted by $g_i$.}

  \Fn{Init}{
    \newcommand{\commentalign}[1]{\makebox[.27\linewidth][l]{#1}}

    \commentalign{$g \leftarrow 0$}
    \tcp{virtual lag (see text)}

    \commentalign{$t \leftarrow 0$}
    \tcp{virtual time}

    \tcp{virtual time queue}
    $\bv \leftarrow$ empty binary min-heap\;

    \tcp{``early'' jobs completed in real time}
    $\be \leftarrow$ empty binary min-heap\;

    \tcp{mapping from job ids of late jobs to their weight}
    $\bl \leftarrow$ empty hashtable\;

    \commentalign{$w_\bl \leftarrow 0$}
    \tcp{$\sum w_i$ for each late job $i$}

    \commentalign{$w_v \leftarrow 0$}
    \tcp{$\sum w_i \forall i$ running in virtual time}

  }

  \;

  \Fn{NextVirtualCompletionTime}{
    \If{$\bv$ and/or $\be$ are not empty}{
      $\hat g \leftarrow \min\{\text{first $g_i$ in $\bv$}, \text{first
        $g_i$ in $\be$\}}$\;
      \Return $t + w_v(\hat g - g)$
    }\lElse{
      \Return $\emptyset$
    }
  }
  
  \;

  \Fn{UpdateVirtualTime($\hat t$)}{
    \lIf{$w_v > 0$}{
      $g \leftarrow g + (\hat t - t)/w_v$
    }
    $t \leftarrow \hat t$
  }

  \;

  \Fn{VirtualJobCompletion($\hat t$)} {
    UpdateVirtualTime($\hat t$)\;
    \If{$\text{first $g_i$ in $\bv$} \leq g$}{
      $(i, \_, w_i) \leftarrow \text{pop($\bv$)}$\;
      $\bl[i] \leftarrow w_i$\;
      $w_\bl \leftarrow w_\bl + w_i$
    }\Else(\tcp*[h]{the virtual job that completes is in $\be$}){
      $(\_, \_, w_i) \leftarrow \text{pop($\be$)}$
    }
    $w_v \leftarrow w_v - w_i$
  }

  \;
  
  \Fn{RealJobCompletion($i$)}{
    \If(\tcp*[h]{we were scheduling late jobs}){$\bl$ is not empty}{
      $w_i \leftarrow pop(\bl[i])$\;
      $w_\bl \leftarrow w_\bl - w_i$\;
    }\Else(\tcp*[h]{we were scheduling the first job in $\bv$}){
      push pop($\bv$) into $\be$
    }
  }

  \;

  \Fn{JobArrival($\hat t, i, s_i, w_i$)}{
    UpdateVirtualTime($\hat t$)\;
    push $(i, g + s_i/w_i, w_i)$ into $\bv$\;
    $w_v \leftarrow w_v + w_i$\;
  }

  \;

  \Fn{ProcessJob}{
    \lIf{$\bl$ is not empty}{
      \Return $\left\{(i, w_i/w_\bl): (i, w_i) \in \bl\right\}$
    }\lElseIf{$\bv$ is not empty}{
      \Return $\{(\text{first job id of $\bv$}, 1)\}$
    }\lElse{
      \Return{$\emptyset$}
    }
  }

  \caption{PSBS.}
  \label{alg:psbs}
\end{algorithm}

Computation is triggered by three events: if a job $i$ of weight $w_i$
and estimated size $s_i$ arrives at time $\hat t$, JobArrival($\hat t,
i, s_i, w_i$) is called; when a job $i$ completes,
RealJobCompletion($i$) is called; finally, when a job completes in
virtual time at time $\hat t$, VirtualJobCompletion($\hat t$) is
called (NextVirtualCompletionTime is used to discover when to call
VirtualJobCompletion). After each event, ProcessJob is
called to determine the new set of scheduled jobs: its output is a set
of $(j, s)$ pairs where $j$ is the job identifier and $s$ is the
fraction of system resources allocated to it.

As auxiliary data structures, we keep two priority queues, $\bv$ and $\be$. 
$\bv$ stores jobs that are
running both in the real time and in the virtual time, while $\be$
stores ``early'' jobs that are still running in the virtual time but
are completed in the real time. For each job $i$, we store in $\bv$ or
$\be$ an immutable tuple $(i, g_i, w_i)$ containing respectively the
job id, the virtual lag $g_i$ and the weight. We use binary min-heaps to
represent $\bv$ and $\be$, using the $g_i$ values as ordering key:
binary heaps are efficient data structures offering worst-case $O(\log
n)$ ``push'' and ``pop'' operations, $O(1)$ lookup of the first value
and eassentially optimal memory efficiency, by virtue of being an implicit data
structure requiring no pointers~\cite{Cormen:2001:IA:580470}. In
addition, the push operation has of $O(1)$ complexity on
average~\cite{porter1975random}.
The state of the scheduler is completed by a mapping $\bl$ from the
identifiers of late jobs to their weight, a counter $t$ representing
the virtual time, and two variables $w_v$ and
$w_\bl$ representing the sum of weights for jobs that are
respectively active in the virtual time and late. Some additional
bookkeeping, not included here for simplicity, would be needed to handle
jobs that complete even when they are not scheduled (\eg, because of
error conditions or after being killed): we refer the interested
reader to the implementation in our
simulator.\footnote{\url{https://github.com/bigfootproject/schedsim/blob/4745b4b581029c4f9cbbb791f43386d32d0ef8f6/schedulers.py}}
Additional details can be found in the supplemental material.

\paragraph*{Complexity Analysis}

We consider here average complexity due to the worst-case $O(n)$
complexity of hashtable operations.\footnote{A denial-of-service
  attack on hashtables has been designed by forging keys
  to obtain collisions~\cite{klink2011effective}. This attack is
  defeated in modern implementations by salting keys before hashing.}
It is trivial to see that
NextVirtualCompletionTime and UpdateVirtualTime have $O(1)$
complexity. Since inserting elements in hashtables has $O(1)$ average
complexity, the cost of VirtualJobCompletion is dominated by the pop
operations on $\bv$ and $\be$: both of them are bound by $O(\log n)$,
where $n$ is the number of jobs in the system. Removing an element
from a hashtable has $O(1)$ average cost, so the cost of
RealJobCompletion is dominated by the pop on $\bv$, which has again
$O(\log n)$ complexity. JobArrival has $O(1)$ average complexity
(remember that pushing elements on a binary heap is $O(1)$ on
average).

The ProcessJob procedure, when $\bl$ is not empty, has $O(|\bl|)$
complexity because the output itself has size $\bl$.
This is however very unlikely to be a limitation in
practical cases, since real-world implementations of schedulers
allocate resources one by one in discrete slots: schedulers such as PS
or DPS are abstractions of mechanisms such as round-robin or max-min
fair schedulers, which can be implemented efficiently; a real-world
implementation of PSBS would adopt similar strategy to mimick the DPS-like
resource sharing when $\bl$ is not empty. We also note that, when
there are no job size estimation errors and PSBS is used to implement
FSP, $\bl$ is guaranteed to always be empty and therefore ProcessJob
will have $O(1)$ complexity.

As we have seen, with the exclusion of ProcessJob as discussed above,
all the procedures of the scheduler have at most $O(\log n)$
computational complexity. Coupled $O(\log n)$
operations having low constant factors because they are implemented on
binary heaps, which are very efficient data structures, we believe
that these performance guarantees are sufficient for a very large set
of practical situations: for example, CFS -- the current Linux
scheduler -- has $O(\log n)$ complexity since it uses a tree
structure~\cite{cfs}.

\section{Evaluation Methodology}
\label{sec:simulator}

Understanding size-based scheduling when there are estimation errors
is not a simple task; analytical studies have been performed only with
strong assumptions such as bounded error~\cite{wierman2008scheduling}.
Moreover, to
the best of our knowledge, the only analytical result known for FSP
(without estimation errors) is its dominance over PS, making
analytical comparisons between SRPTE-based and FSPE-based scheduling
policies even more difficult.

For these reasons, we evaluate our proposals
through simulation. The simulative approach is extremely flexible,
allowing to take into account several parameters -- distribution of the
arrival times, of the job sizes, of the errors. Previous simulative
studies (\eg,~\cite{lu2004size}) have focused on a subset of
these parameters, and in some cases they have used real traces. In our
work, we developed a tool that is able to both reproduce real traces
and generate synthetic ones. Moreover, thanks to the efficiency of the
implementation, we were able to run an extensive evaluation campaign,
exploring a large parameter space. For these
reasons, we are able to provide a broad view of the applicability of
size-based scheduling policies, and show the benefits and the
robustness of our solution with respect to the existing ones.

\subsection{Scheduling Policies Under Evaluation}
\label{sec:schedpolicies}

In this work, we take into account different scheduling policies, both size-based and size-oblivious. For the size-based disciplines, we consider SRPT as a reference for its optimality with respect to the MST. When introducing the errors, we evaluate SRPTE, FSPE 
and our proposals described in Section~\ref{sec:solution}.

As size-oblivious policies, we have implemented the
\emph{First In, First Out} (FIFO) and \emph{Processor Sharing} (PS)
disciplines, along with DPS, the generalization of PS with
weights~\cite{kleinrock1976queueing}.
These policies are the default disciplines used in many
scheduling systems -- \eg, the default scheduler in
Hadoop~\cite{white2009hadoop} implements a FIFO policy, while Hadoop's
FAIR scheduler is inspired by PS; the Apache web server delegates
scheduling to the Linux kernel, which in turn implements a PS-like
strategy~\cite{schroeder2006web}. Since PS scheduling divides evenly
the resources among running jobs, it is generally considered as a
reference for its fairness (see the next section on the performance
metrics). Finally, we consider also the \emph{Least Attained Service}
(LAS)~\cite{rai2003analysis} policy. LAS scheduling is a preemptive
policy that gives service to the job that has received the least
service, sharing it equally in a PS mode in case of ties. LAS
scheduling has been designed considering the case of heavy-tailed job
size distributions, where a large percentage of the total work
performed in the system is due to few very large jobs, since it gives
higher priority to small jobs than what PS would do.

\subsection{Performance Metrics}
\label{sec:metrics}

We evaluate scheduling policies according to two main aspects:
\emph{mean sojourn time} (MST) and \emph{fairness}. 
Sojourn time is the time that passes between the
moment a job is submitted and when it completes; such a metric is
widely used in the scheduling literature.
The definition of fairness is more elusive: in his survey on the
topic, \citeN{wierman2011fairness} affirms that \textit{``fairness is an amorphous concept
that is nearly impossible to define in a universal
way''}. When the job size distribution is
skewed, it is intuitively unfair to expect similar sojourn times
between very small jobs and much larger ones; a common approach is to
consider \emph{slowdown}, \ie the ratio between a job's sojourn time
and its size, according to the intuition that the waiting time for a
job should be somewhat proportional to its size. In this work we focus
on the per-job slowdown, to check that as few jobs as
possible experience ``unfair'' high slowdown values;
moreover, in accordance with Wierman's definition~\cite{wierman2007fairness},
we also evaluate \emph{conditional slowdown},
which evaluates the expected slowdown given a job size, verifying whether jobs
of a particular size experience an ``unfair'' high expected slowdown value.

\subsection{Parameter Settings}
\label{sec:trace_generation}

\begin{table}[!t]
    \centering
    \begin{tabular}{|l|l|r|}
      \hline
      Parameter & Explanation & Default\\
      \hline
      \hline
      sigma & $\sigma$ in the log-normal error distribution & 0.5\\
      shape & shape for Weibull job size distribution & 0.25\\
      timeshape & shape for Weibull inter-arrival time & 1\\
      njobs & number of jobs in a workload & 10,000\\
      load & system load & 0.9\\
      \hline
    \end{tabular}
    \caption{Simulation parameters.}
  \label{table:parameters}
  
\end{table}

We empirically evaluate scheduling policies in a wide spectrum of
cases. Table~\ref{table:parameters} synthetizes the input parameters
of our simulator; they are discussed in the following.

\paragraph*{Job Size Distribution}
Job sizes are generated according to a
Weibull distribution, which allows us to evaluate both heavy-tailed
and light-tailed job size distributions. The
\emph{\textbf{shape}} parameter allows to interpolate between
heavy-tailed distributions (shape $<1$), the exponential distribution
(shape$=1$), the Raleigh distribution ($\text{shape}=2$) and
light-tailed distributions centered around the `1' value
($\text{shape}>2$). We set the \emph{scale} parameter of the
distribution to ensure that its mean is 1.

Since scheduling problems have been generally analyzed on heavy-tailed
workloads with job sizes using distributions such as Pareto, we
consider a default heavy-tailed case of $\text{shape}=0.25$. In our
experiments, we vary the shape parameter between a very skewed
distribution with $\text{shape}=0.125$ and a light-tailed
distribution with $\text{shape}=4$.

\paragraph*{Size Error Distribution}
We consider log-normally distributed
errors. A job having size $s$ will be estimated as $\hat{s}=sX$,
where $X$ is a random variable with distribution
\begin{equation}
\operatorname{Log-\mathcal{N}}(0,\sigma^{2}).
\label{eq:lognormal}
\end{equation}

This choice satisfies two properties: first, since error is
multiplicative, the absolute error $\hat s - s$ is proportional to the
job size $s$; second, under-estimation and over-estimation are equally
likely, and for any $\sigma$ and any factor $k > 1$ the (non-zero)
probability of under-estimating $\hat s\leq\frac s k$ is the same of
over-estimating $\hat s\geq ks$.  This choice also is substanciated by
empirical results: in our implementation of the HFSP scheduler for
Hadoop~\cite{pastorelli2013hfsp}, we found that the empirical error
distribution was indeed fitting a log-normal distribution.

The \emph{\textbf{sigma}} parameter
controls $\sigma$ in Equation~\ref{eq:lognormal}, with a default --
used if no other information is given -- of 0.5; with this value, the
median factor $k$ reflecting relative error is 1.40. In our
experiments, we let sigma vary between 0.125 (median $k$ is 1.088) and
4 (median $k$ is 14.85). 

It is possible to compute the correlation between the estimated and real
size as $\sigma$ varies. In particular, when sigma is equal to 0.5, 1.0, 2.0 and 4.0,  
the correlation coefficient is equal to 0.9, 0.6, 0.15 and 0.05 respectively.

The mean of this distribution is always larger than 1,
  and, as sigma grows, the system is biased towards
  overestimating the aggregate size of several jobs, limiting the
  underestimation problems that our proposals are designed
  to solve. Even in this setting, the results in
  Section~\ref{sec:experimental_results} show that the improvements
  we obtain are still significant.

\paragraph*{Job Arrival Time Distribution}
For the job inter-arrival time
distribution, we use again a Weibull distribution for its flexibility to
model heavy-tailed, memoryless and light-tailed distributions.  We set
the default of its shape parameter (\emph{\textbf{timeshape}}) to 1,
corresponding to ``standard'' exponentially distributed arrivals. Also
here, timeshape varies between 0.125 (very bursty arrivals separated
by long intervals) and 4 (regular arrivals).

\paragraph*{Other Parameters}
The \emph{\textbf{load}} parameter is the mean arrival rate divided by
the mean service rate. As a default, we use 0.9 like \citeN{lu2004size};
in our
experiments we let it
vary between 0.5 and 0.999.
The number of jobs (\emph{\textbf{njobs}}) in each simulation round is
10,000. For each experiment, we perform at least 30
repetitions, and we compute the confidence interval for a confidence
level of 95\%. For very heavy-tailed job size distributions (shape
$\leq 0.25$), results are very variable and therefore, to
obtain stable averages, we performed hundreds and/or thousands of
experiment runs, at least until the confidence levels have reached the 5\% of
the estimated values.

\section{Experimental Results}
\label{sec:experimental_results}

We now proceed to an extensive report of our experimental findings. We
first provide a high-level view showing that our proposals outperform PS,
excepting only extreme cases of \emph{both} error \emph{and} job skew
(Section~\ref{sec:mst_against_ps}); we then proceed to a more in-depth
comparison of our proposals, to validate our choice of using FSPE+PS as
a base for PSBS (Section~\ref{sec:comparing_proposals}). We then evaluate the
performance of PSBS against existing schedulers, while varying the two
parameters that most influence scheduler performance: shape
(Section~\ref{sec:shape}) and sigma (Section~\ref{sec:sigma}). We
proceed to show that PSBS handles jobs fairly
(Section~\ref{sec:fairness}) and that job weights are handled
correctly (Section~\ref{sec:priority}); we conclude our analysis on
synthetic workloads by showing that our results hold even while
varying settings over the parameter space
(Section~\ref{sec:other_settings}). We conclude our analysis by
comparing PSBS to existing schedulers on real workloads extracted from
Hadoop logs and an HTTP cache (Section~\ref{sec:real_workloads}).

For the results shown in
the following, parameters whose values are not explicitly stated
take the default values in Table~\ref{table:parameters}. For
readability, we do not show the confidence intervals: 
for all the points, in fact, we have
performed a number of runs sufficiently high to obtain a confidence
interval smaller than 5\% of the estimated value. Where not otherwise
stated, all the $w_i$ parameters representing the weight of each job
$i$ have always been set to 1.

\subsection{Mean Sojourn Time Against PS}
\label{sec:mst_against_ps}

\begin{figure*}[!t]
  \centering
  \subfloat[SRPTE.]{
    \includegraphics[width=\threeplotwidth]{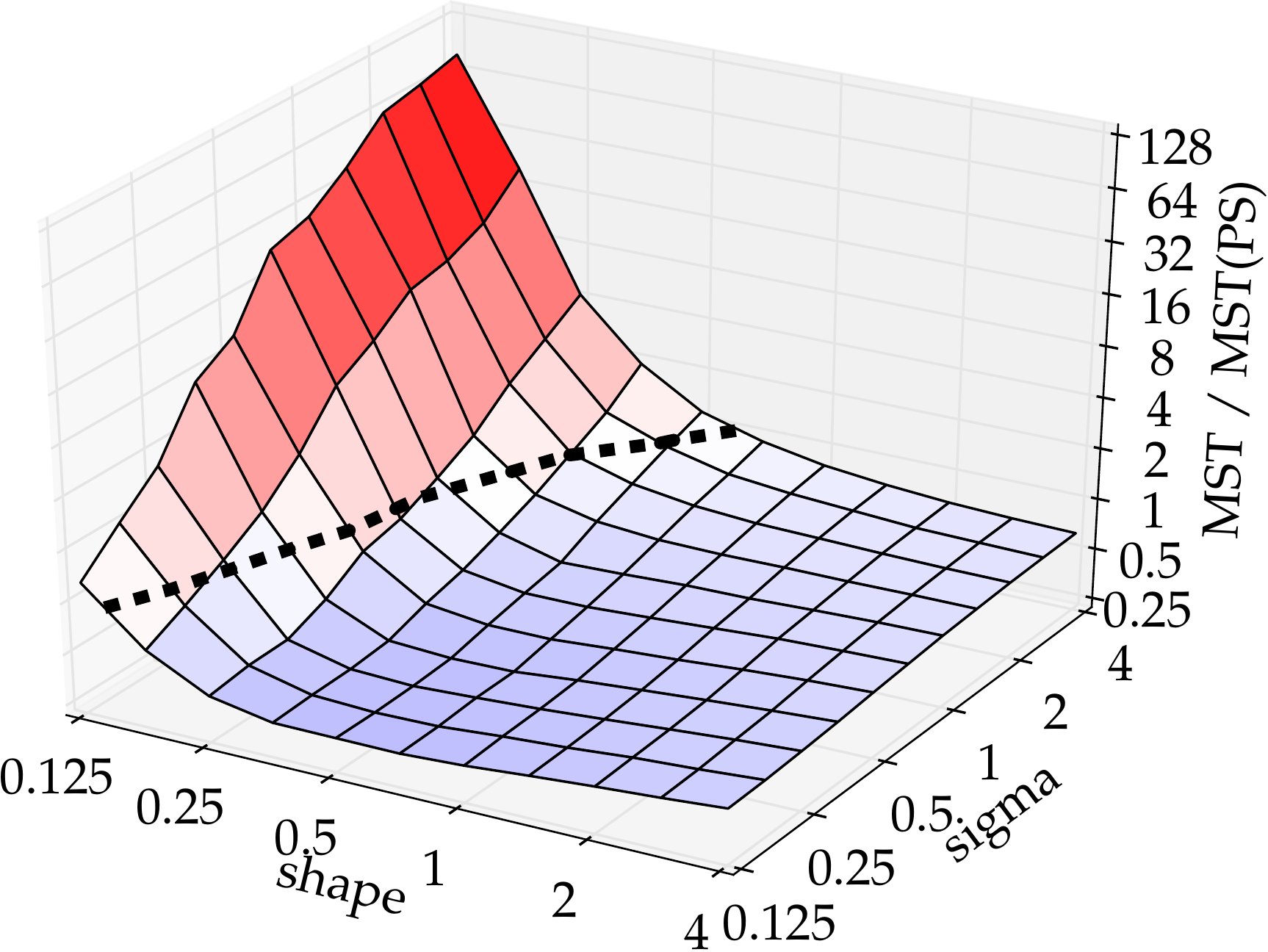}
    \label{fig:3d_srpte}
  }
  \subfloat[SRPTE+PS.]{
    \includegraphics[width=\threeplotwidth]{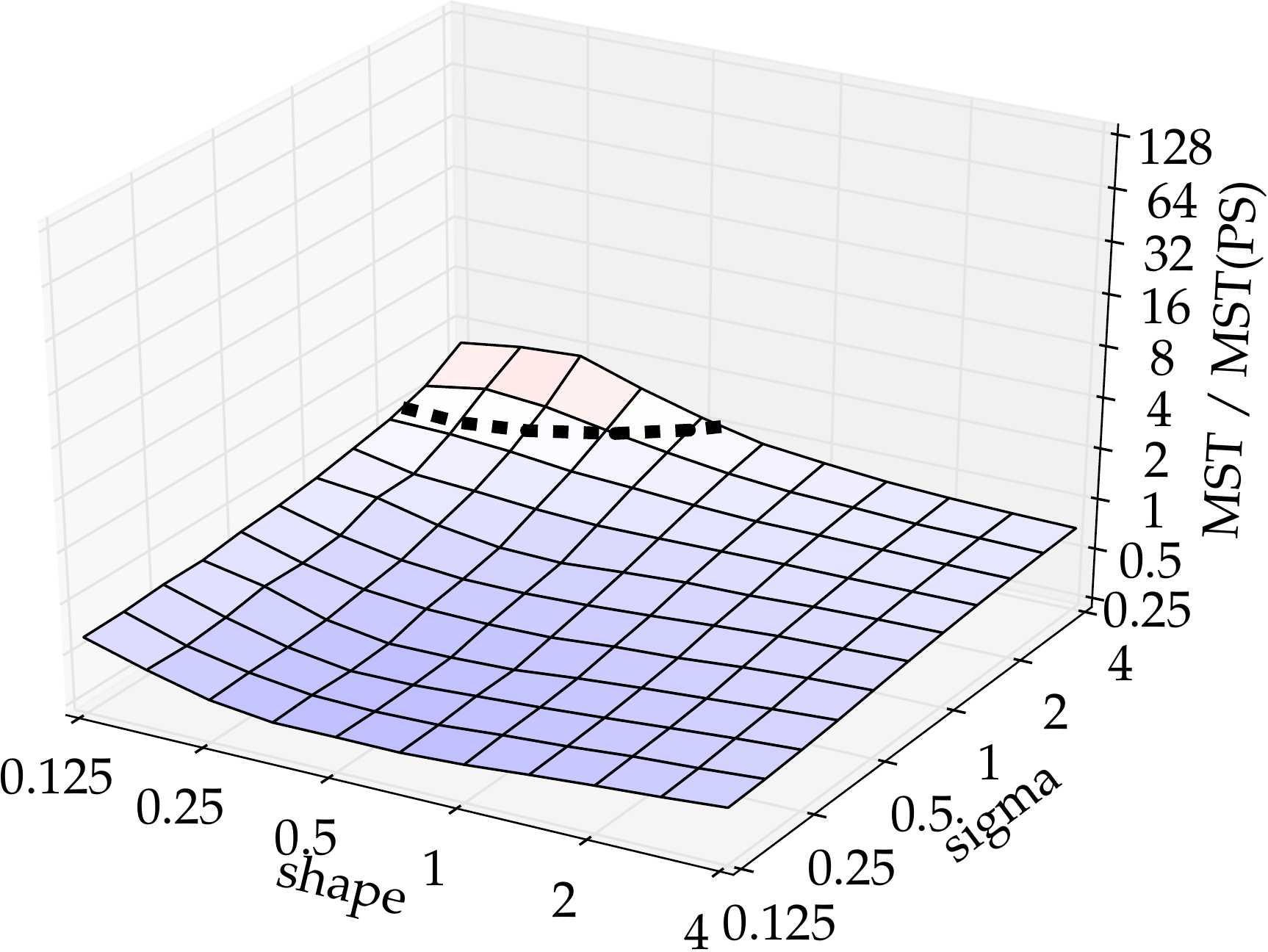}
    \label{fig:3d_srpte+ps}
  }
  \subfloat[SRPTE+LAS.]{
    \includegraphics[width=\threeplotwidth]{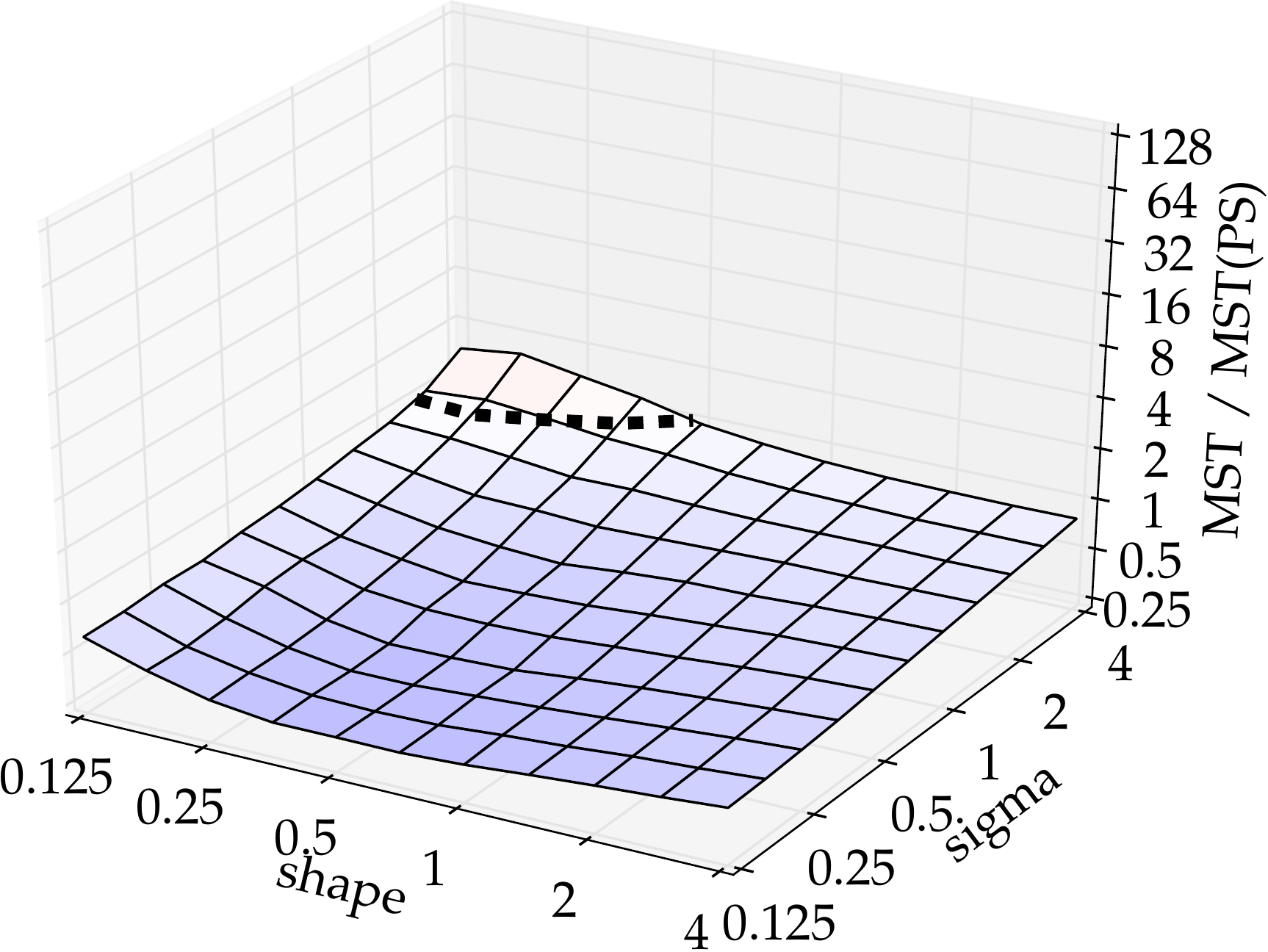}
    \label{fig:3d_srpte+las}
  }
  \\
  \subfloat[FSPE.]{
    \includegraphics[width=\threeplotwidth]{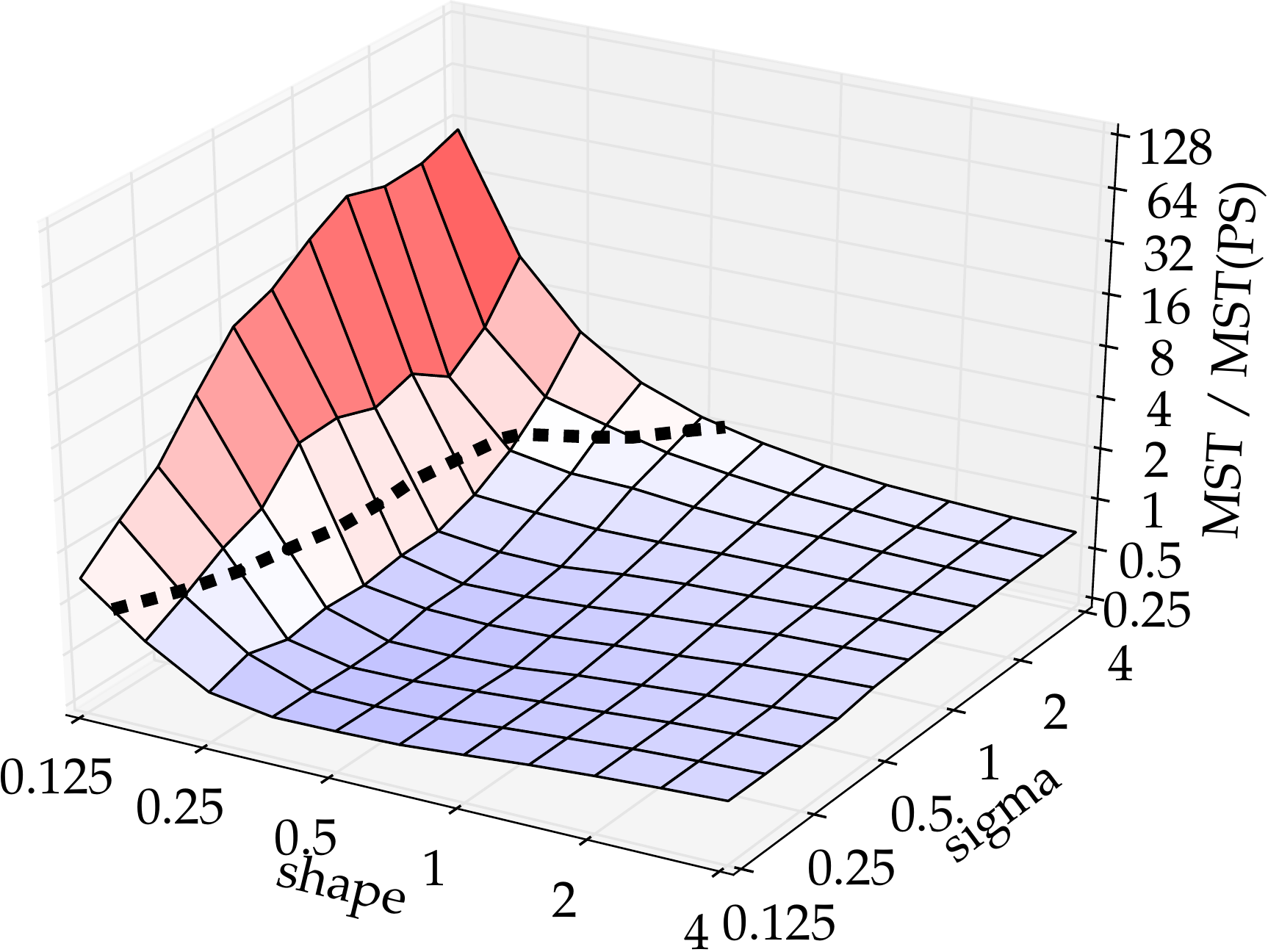}
    \label{fig:3d_fspe}
  }
  \subfloat[FSPE+PS.]{
    \includegraphics[width=\threeplotwidth]{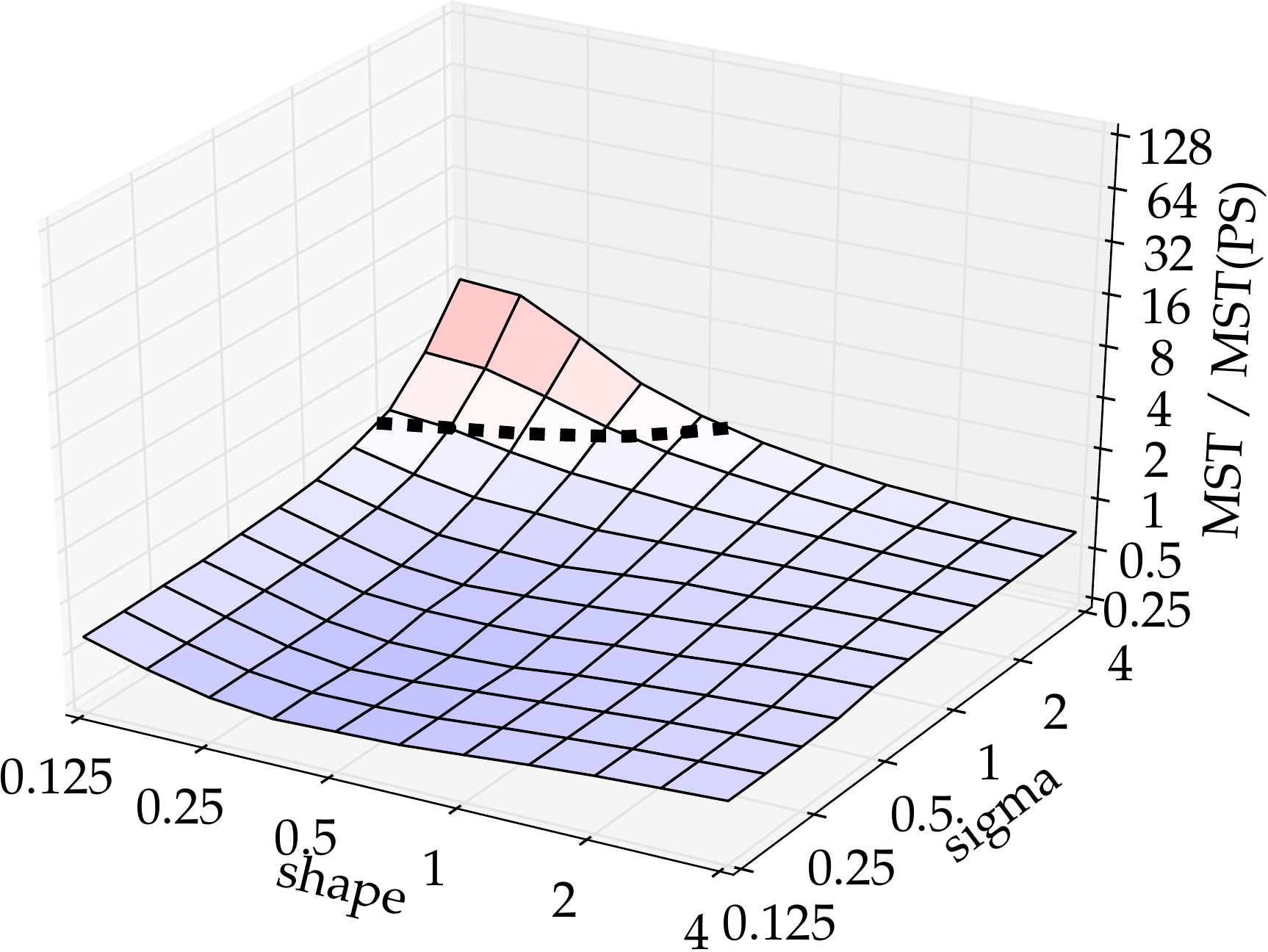}
    \label{fig:3d_fspe+ps}
  }
  \subfloat[FSPE+LAS.]{
    \includegraphics[width=\threeplotwidth]{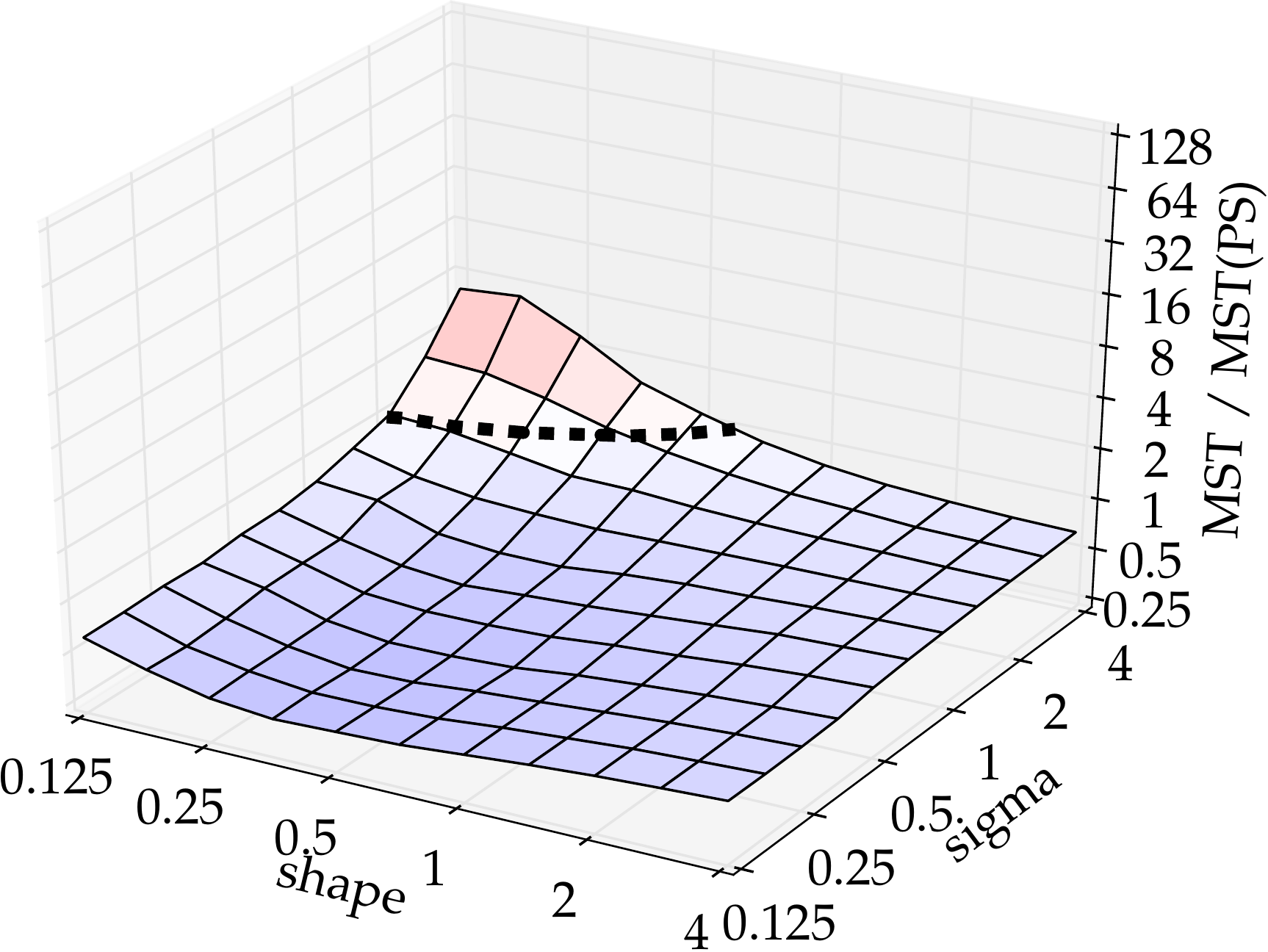}
    \label{fig:3d_fspe+las}
  }
  \caption{Mean sojourn time against PS: the dashed line is the
    boundary where MST is equivalent to that of PS. We recall that a low
    shape value is associated to high job size skew, while high sigma
    entails imprecise job size estimates.}
  \label{fig:3d}
\end{figure*}


We begin our analysis by comparing
the size-based scheduling policies, using PS as a baseline
because PS and its variants are the most widely used set of scheduling
policies in real systems. In Fig.~\ref{fig:3d} we plot the value of
the MST obtained using SRPTE, FSPE and the four alternatives
we propose in Section~\ref{sec:unweighted-proposal},
normalizing it against the MST of PS. We vary the sigma and shape
parameters influencing respectively job size distribution and error
rate; we will see that these two parameters are the ones that
influence performance the most. Values lower than one (below the
dashed line in the plot) represent regions where size-based schedulers
perform better than PS.

In accordance to intuition and to what is known from the literature,
we observe that the performance of size-based scheduling policies
depends on the accuracy of job size estimation: as sigma grows,
performance suffers. In addition, from Figures~\ref{fig:3d_srpte}
and~\ref{fig:3d_fspe}, we observe a new phenomenon: \emph{job
  size distribution impacts performance even more than size estimation
  error.} On the one hand, we notice that large areas of the plots
($\textrm{shape} > 0.5$) are almost insensitive to estimation errors;
on the other hand, we see that MST becomes very large as job size skew
grows ($\textrm{shape} < 0.25$). We attribute this latter phenomenon
to the fact that, as we highlight in Section~\ref{sec:errors}, late
jobs whose estimated remaining (virtual) size reaches zero are never
preempted. If a large job is under-estimated and becomes \emph{late} with
respect to its estimation, small jobs will have to wait for it to
finish in order to be served.

As we see in
Figures~\ref{fig:3d_srpte+ps},~\ref{fig:3d_srpte+las},~\ref{fig:3d_fspe+ps} and~\ref{fig:3d_fspe+las}, 
\emph{our proposals outperform
  PS in a large class of heavy-tailed workloads} where SRPTE and FSPE
suffer. The net result is that the size-based policies we propose are
outperformed by PS only in extreme cases where \emph{both} the job size
distribution is extremely skewed \emph{and} job size estimation is
very imprecise.

It may appear surprising that, when job size skew is not extreme,
size-based scheduling can outperform PS even when size estimation
is very imprecise: even a small correlation between job size
and its estimation can direct the scheduler towards choices that are
beneficial on aggregate. In fact, as we see more in detail in the
following (Section~\ref{sec:shape}), sub-optimal scheduling choices
become less penalized as the
job size skew diminishes.

\subsection{Comparing Our Proposals}
\label{sec:comparing_proposals}

\begin{figure*}[!t]
  \centering
  \subfloat[$\text{shape}=0.25$.]{
    \includegraphics[width=\threeplotwidth]{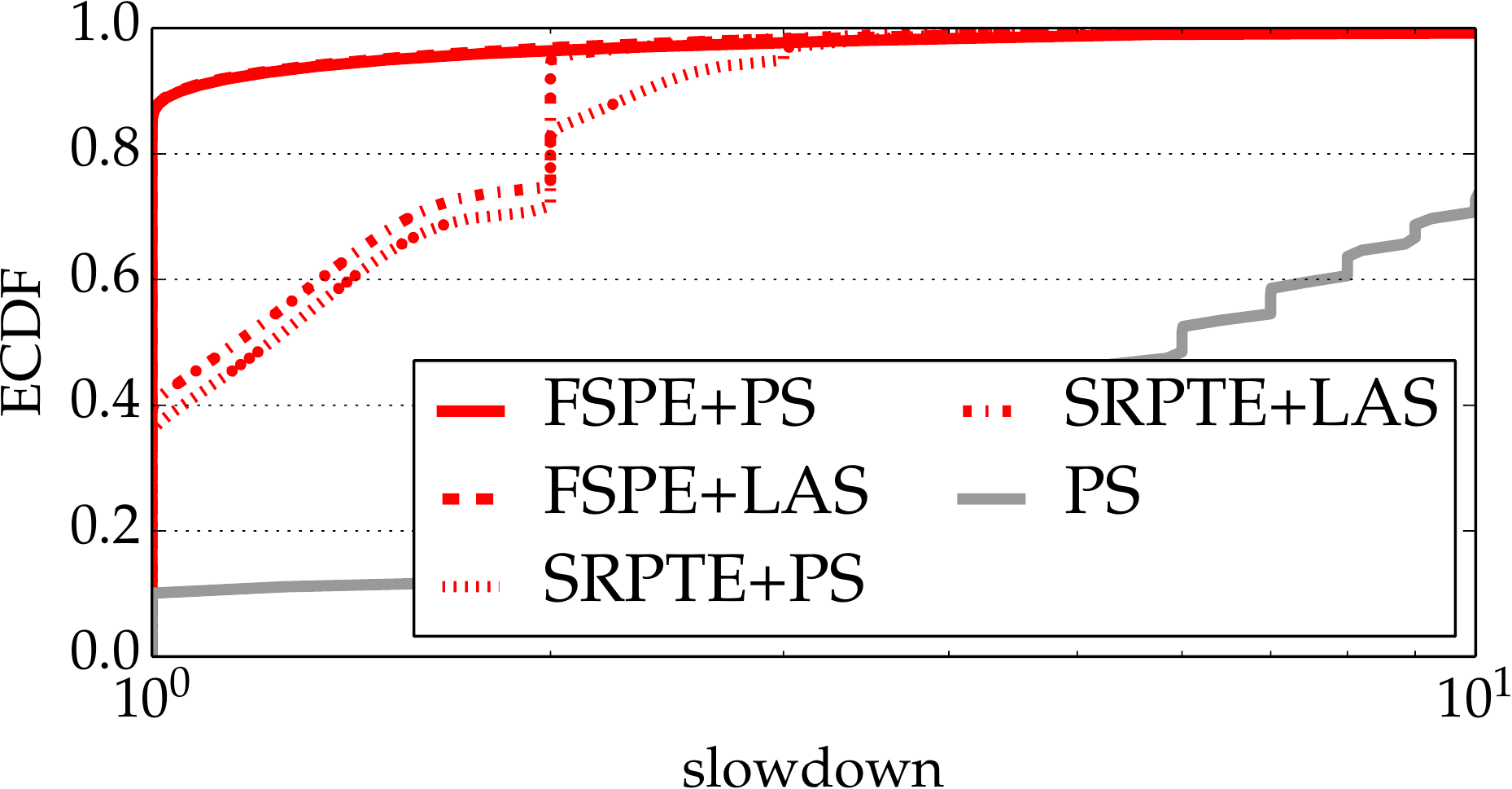}
    \label{fig:alt_025}
  }
  \subfloat[$\text{shape}=1$.]{
    \includegraphics[width=\threeplotwidth]{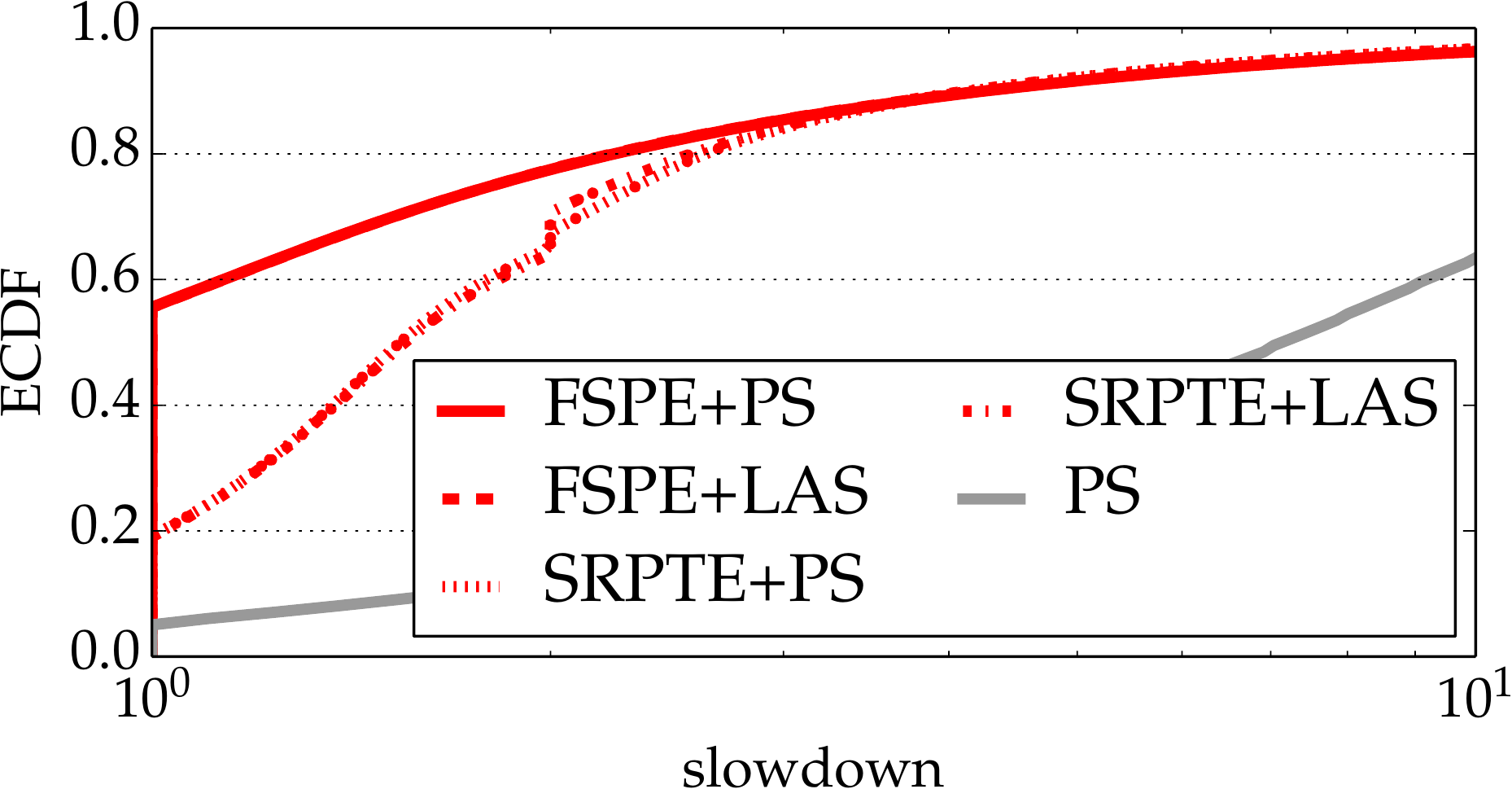}
  }
  \subfloat[$\text{shape}=4$.]{
    \includegraphics[width=\threeplotwidth]{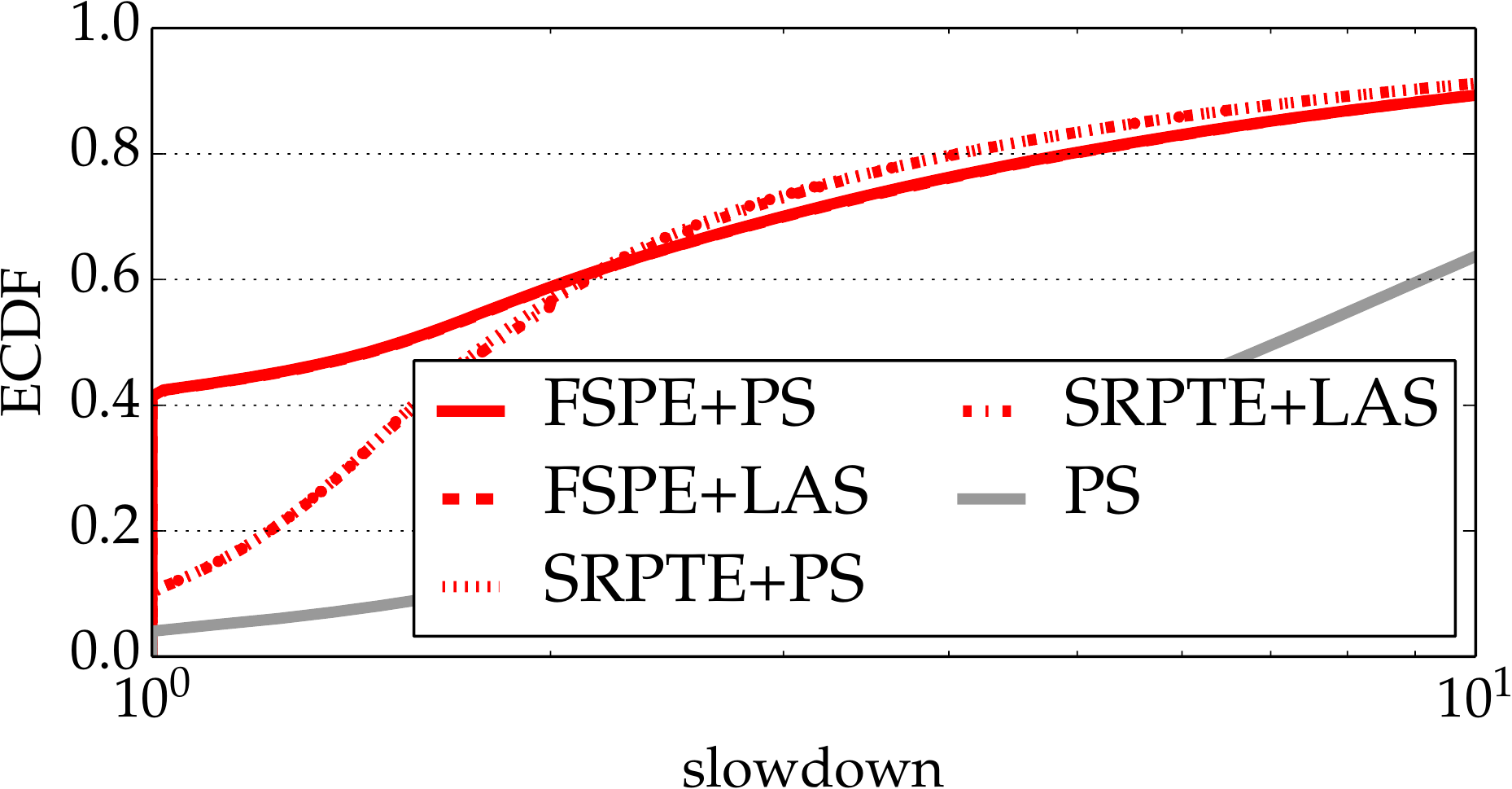}
  }
  \caption{Distribution of per-job slowdown. The two FSPE-based
    policies perform best, with negligible differences between them.}
  \label{fig:alt_slowdown}
\end{figure*}

How do the schedulers we proposed in
Section~\ref{sec:unweighted-proposal} compare? In
Fig.~\ref{fig:alt_slowdown} we examine the
empirical cumulative distribution function (ECDF) of the slowdown for
all jobs we simulate while varying the shape parameter (sigma
maintains its default value
of 0.5); we plot the results for PS as a reference
and observe that the staircase-like pattern observable in
Fig.~\ref{fig:alt_025} is a clustering around integer values
obtained if a small job gets submitted while $n$
larger ones are running.

We observe that, in general, our proposals pay off: for all values of
shape considered, the slowdown distribution of our proposals is well
lower than the one of PS. We also observe a difference between the
schedulers based on SRPTE and those based on FSPE: a noticeably larger
number of jobs experience an optimal slowdown of 1 when using a
scheduler based on FSPE. This is because, when using FSPE-based
scheduling policies, the number of jobs that are eligible for PS- or
LAS-based scheduling is higher: when late jobs exist, only they are eligible to be
scheduled, unlike what happens in SRPTE-based policies; as a
consequence, several small jobs suffer in SRPTE-based policies because
they are preempted too aggressively: as soon as they become late, even if they are the only late job in the system. This confirms the soundness of
the design policy we adopted in
Section~\ref{sec:unweighted-proposal}:
minimizing the number of eligible jobs for PS- or
LAS-based scheduling.
Fig.~\ref{fig:alt_slowdown}
shows that even allowing to schedule \emph{a single} non-late job can
hurt performance.

Since the number of late jobs is generally small, differences in
scheduling between FSPE+PS and FSPE+LAS are rare. This is confirmed by
noticing that the lines for the two schedulers in
Fig.~\ref{fig:alt_slowdown} are essentially analogous; we conclude
that FSPE+PS and FSPE+LAS have essentially analogous performance. This
fact and the property that FSPE+PS avoids starvation, as noted in
Section~\ref{sec:psbs}, motivated us to develop PSBS as a
generalization of FSPE+PS.

\subsection{Impact of Shape}
\label{sec:shape}

\begin{figure}[!t]
  \centering
  \includegraphics[width=\plotwidth]{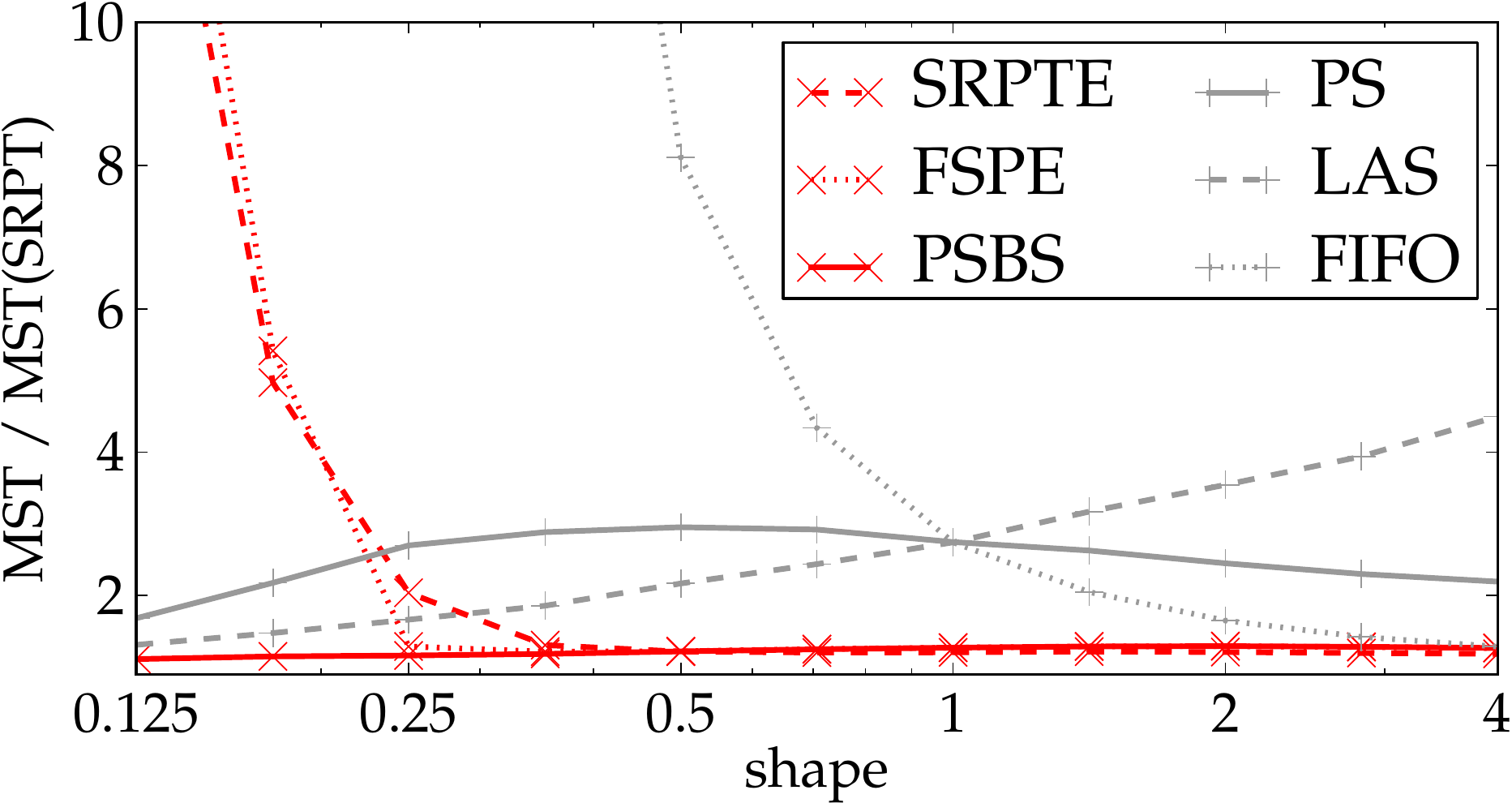}
  \caption{Impact of shape. PSBS behaves close to optimally in all cases.}
  \label{fig:shape}
\end{figure}

After validating the choice of PSBS as a generalization
of FSPE+PS, we now examine how it
performs when compared to the optimal MST that SRPT obtains. In the
following Figures, we show the ratio between the MST obtained with the
scheduling policies we implemented and the optimal one of SRPT, while
fixing sigma to its default value of 0.5.

From Fig.~\ref{fig:shape}, we see that the shape parameter is
fundamental for evaluating scheduler performance. We notice that
PSBS has \emph{almost optimal performance for all shape values
  considered},
  while SRPTE and FSPE perform
poorly for highly skewed workloads. Regarding non size-based policies,
PS is outperformed by LAS for heavy-tailed workloads (shape $< 1$) and
by FIFO for light-tailed ones having $\text{shape} > 1$; PS provides a reasonable trade-off when the
job size distribution is unknown.  When the job size distribution is
exponential (shape $= 1$), non size-based scheduling policies perform
analogously; this is a result which has been proven analytically (see
\eg the work by \citeN{harchol2009queueing} and the
references therein).    It is interesting to consider
  FIFO: in it, jobs are scheduled in series, and
  job priority is not correlated with size: indeed, the
  MST of FIFO is equivalent to the one of a random scheduler executing
  jobs in series~\cite{klugman2012loss}. FIFO can be therefore seen as
  the limit case for a size-based scheduler such as FSPE or SRPTE when
  estimations carry no information at all about job sizes; the fact
  that errors become less critical as skew diminishes can be therefore
  explained with the similar patterns observed for FIFO.

\begin{figure*}[!t]
  \centering
  \subfloat[$\text{shape}=0.25$]{
    \includegraphics[width=\threeplotwidth]{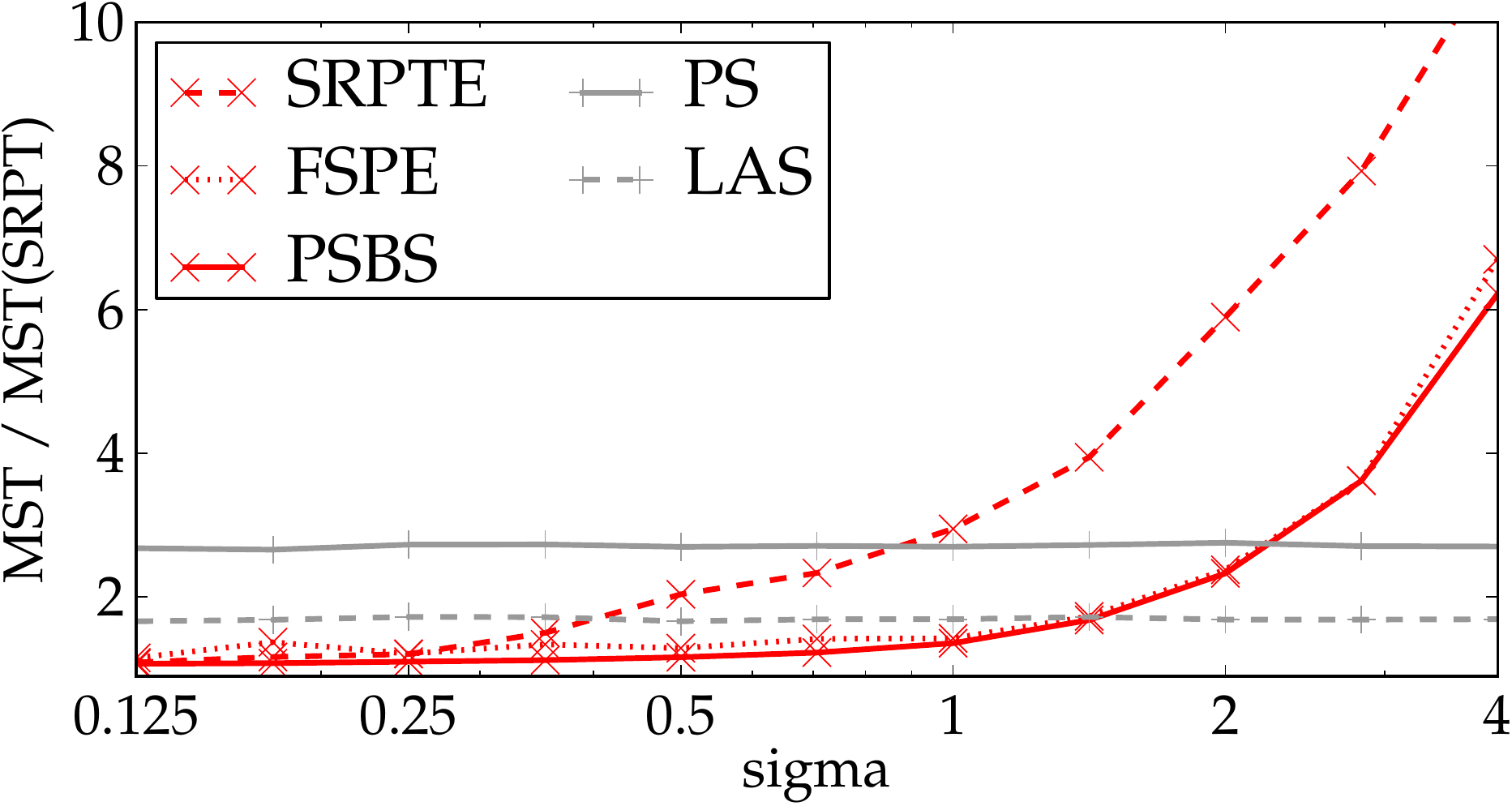}
    \label{fig:shape_0250}
  }
  \subfloat[$\text{shape}=0.177$]{
    \includegraphics[width=\threeplotwidth]{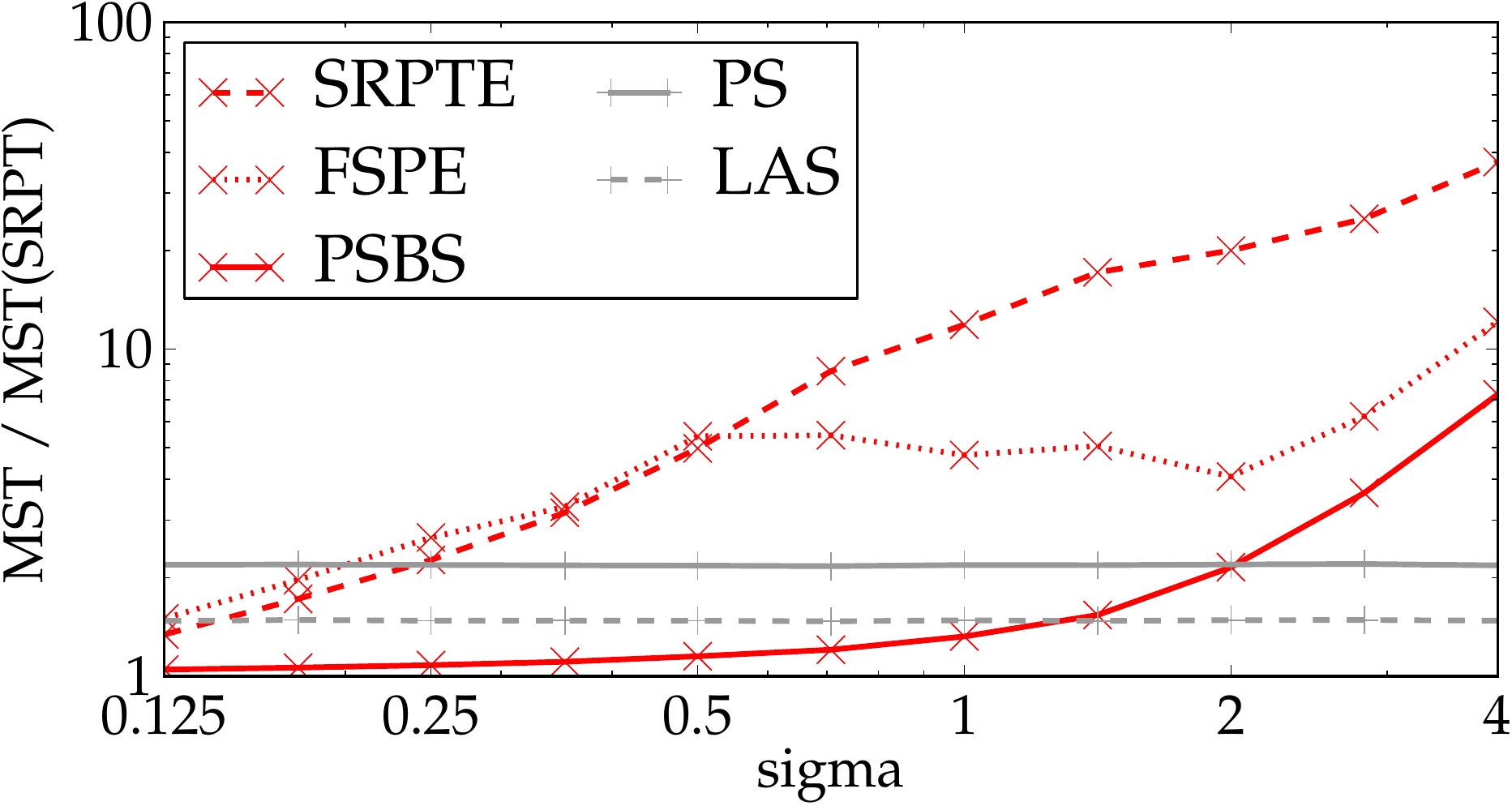}
    \label{fig:shape_0177}
  }
  \subfloat[$\text{shape}=0.125$]{
    \includegraphics[width=\threeplotwidth]{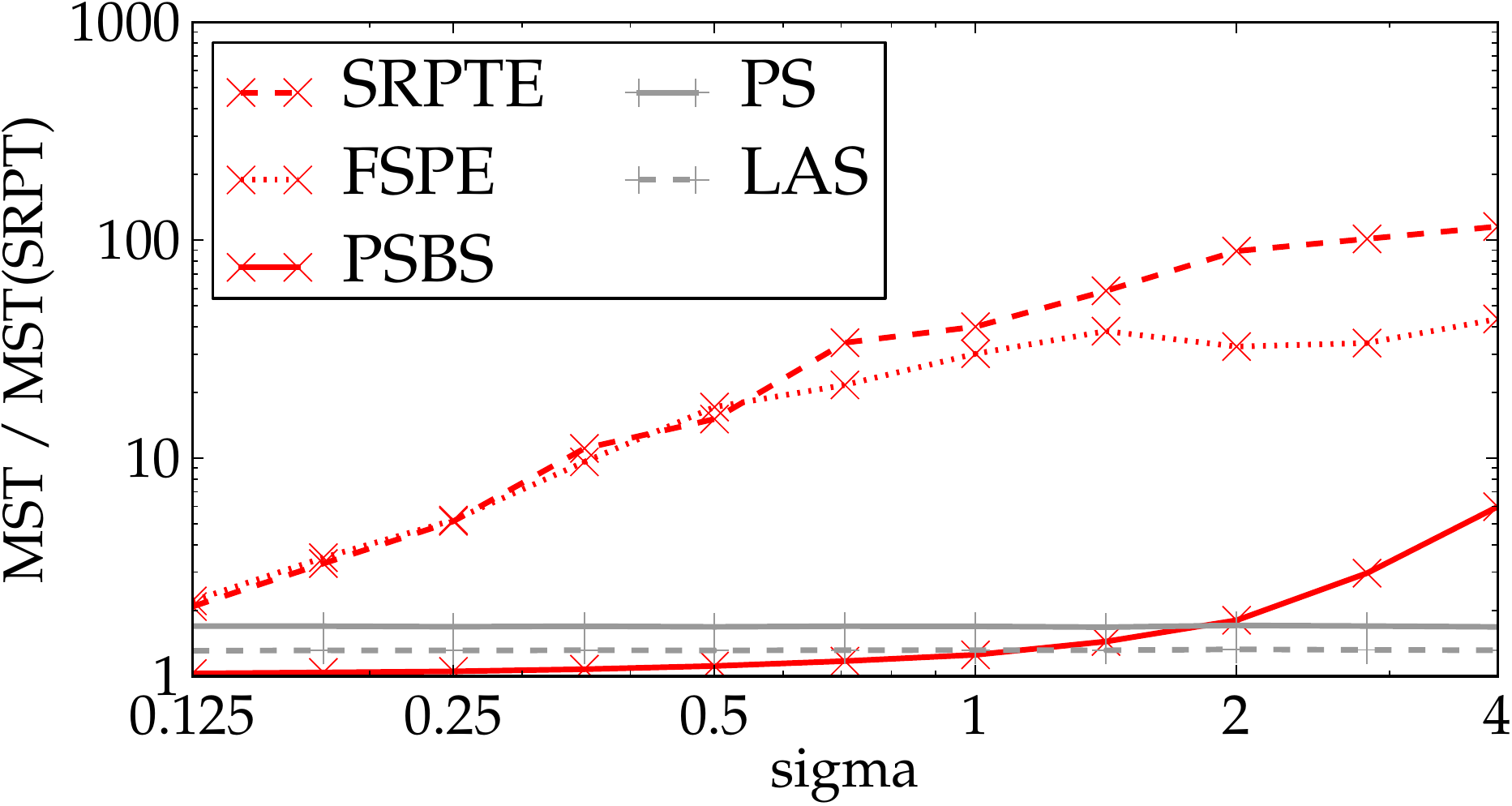}
    \label{fig:shape_0125}
  }
  \caption{Impact of error on heavy-tailed workloads, sorted by
    growing skew.}
  \label{fig:mst_heavytail}
  \label{fig:sigma}
\end{figure*}


\subsection{Impact of Sigma}
\label{sec:sigma}

The shape of the job size distribution is
fundamental in determining the behavior of scheduling algorithms, and
heavy-tailed job size distributions are those in which the
behavior of size-based scheduling differs noticeably. Because of this,
and since heavy-tailed workloads are central in the literature on
scheduling, we focus on those.

In Fig.~\ref{fig:sigma}, we show the impact of the sigma parameter
representing error for three heavily skewed workloads. In all three
plots, the values for FIFO fall outside of the plot. These plots
demonstrate that PSBS is robust with respect to errors in all the
three cases we consider, while SRPTE and FSPE suffer as the skew
between job sizes grows. In all three cases, PSBS performs better
than PS as long as sigma is lower than 2: this
corresponds to lax bounds on size estimation quality,
  requiring a correlation coefficient between job size and its
estimate of 0.15 or more.

In all three plots, PSBS performs better than SRPTE; the difference
between PSBS and FSPE, instead, is discernible only for
$\textrm{shape}<0.25$. We explain this difference by noting that, when
several jobs are in the queue, size reduction in the virtual queue of
FSPE is slow: hence, less jobs become late and therefore non
preemptable. As the distribution becomes more heavy-tailed, more jobs
become late in FSPE and differences between FSPE and PSBS become
significant, reaching differences of even around one order of
magnitude.

In particular in Fig.~\ref{fig:shape_0177}, there are areas ($0.5 <
\text{sigma} < 2$) in which increasing errors decreases (slightly)
the MST of FSPE.  This counterintuitive phenomenon is explained by the
characteristics of the error distribution: the mean of the log-normal
distribution grows as sigma grows, therefore the aggregate amount of
work for a set of several jobs is more likely to be over-estimated;
this reduces the likelihood that several jobs at once become late and
therefore non-preemptable. In other words, FSPE works better with
estimation means that tend to over-estimate job size; however, it is
always better to use PSBS, which provides a more reliable and
performant solution to the same problem.

In additional experiments -- not included due to space limitations --
we observed similar results with other error distributions; in cases
where errors tend towards underestimations, we find that the
improvements that PSBS gives over FSPE and SRPTE are even more important.

\subsection{Fairness}
\label{sec:fairness}

We now consider fairness, intending -- as discussed
in Section~\ref{sec:metrics} -- that jobs' running time should be
proportional to their size, and therefore
slowdowns should not be large.

\begin{figure}[!t]
  \centering \includegraphics[width=\plotwidth]{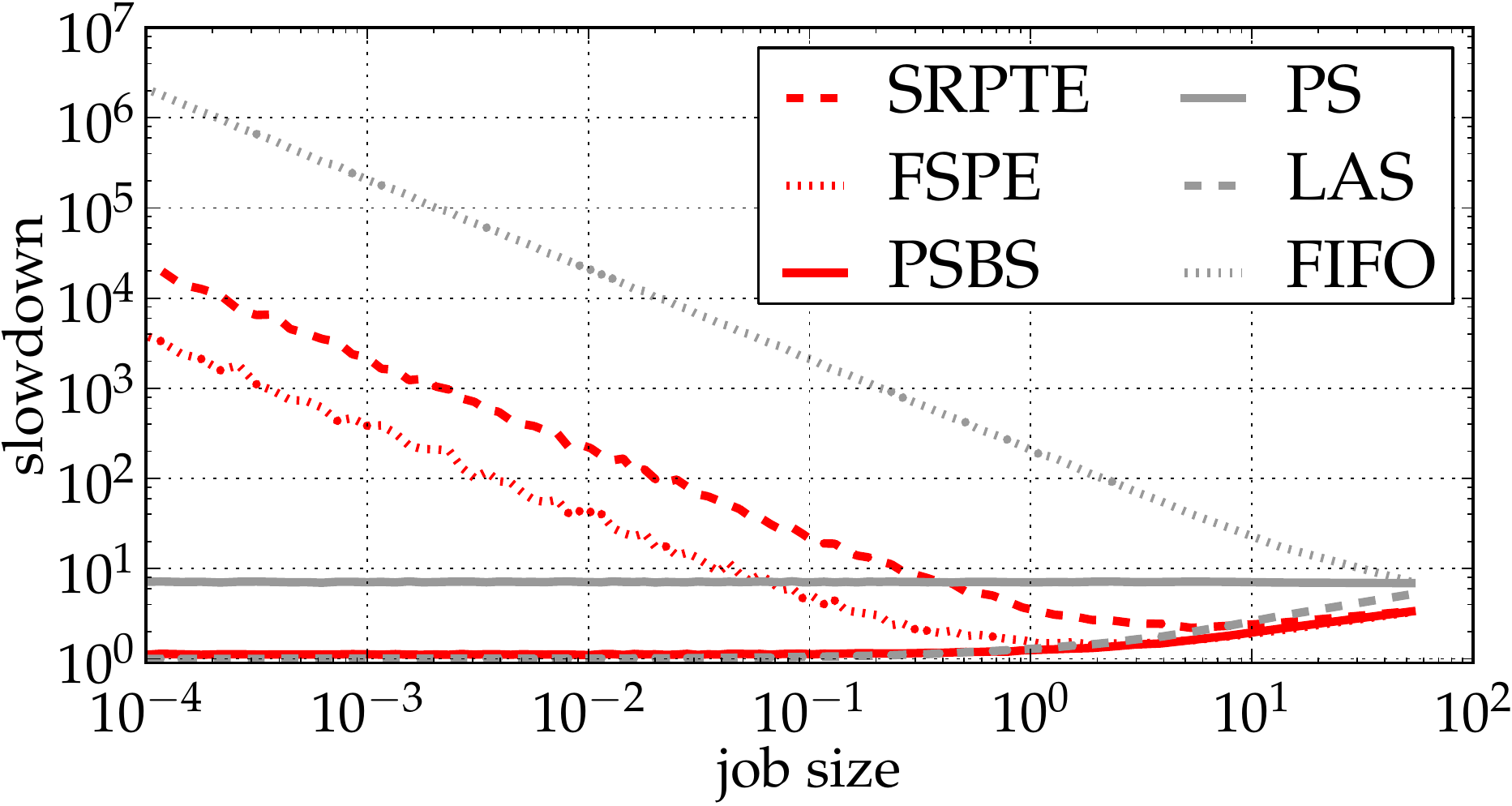}
  \caption{Mean conditional slowdown. PSBS outperforms PS, the
    scheduler often taken as a reference for fairness.}
  \label{fig:size_vs_slowdown}
\end{figure}

\paragraph*{Conditional Slowdown}

To better understand the reason for the unfairness of FIFO, SRPTE and
FSPE, in Fig.~\ref{fig:size_vs_slowdown} we evaluate \emph{mean
 conditional slowdown}, showing average slowdown (job sojourn time divided 
by job size) against job size, using our
default simulation parameters. The figure has been obtained by sorting
jobs by size and binning them into 100 job classes having similar size
and containing the same number of jobs; points plotted are obtained by averaging job size
and slowdown in each of the 100 classes.

The lines of FIFO, SRPTE and FSPE are almost parallel for smaller jobs 
because, below a certain size, \emph{job sojourn
  time is essentially independent from job size}: indeed, it depends
on the total size of older (for FIFO) or late (for SRPTE and FSPE)
jobs at submission time.

We confirm experimentally that the expected slowdown in PS is
constant, irrespectively of job size~\cite{wierman2007fairness};
PSBS and LAS, on the other hand, have close to optimal slowdown for
small jobs. PSBS has a better MST because it performs better
for larger jobs, which are more penalized in LAS.

\begin{figure}[!t]
  \centering
  \subfloat{\includegraphics[width=\plotwidth]{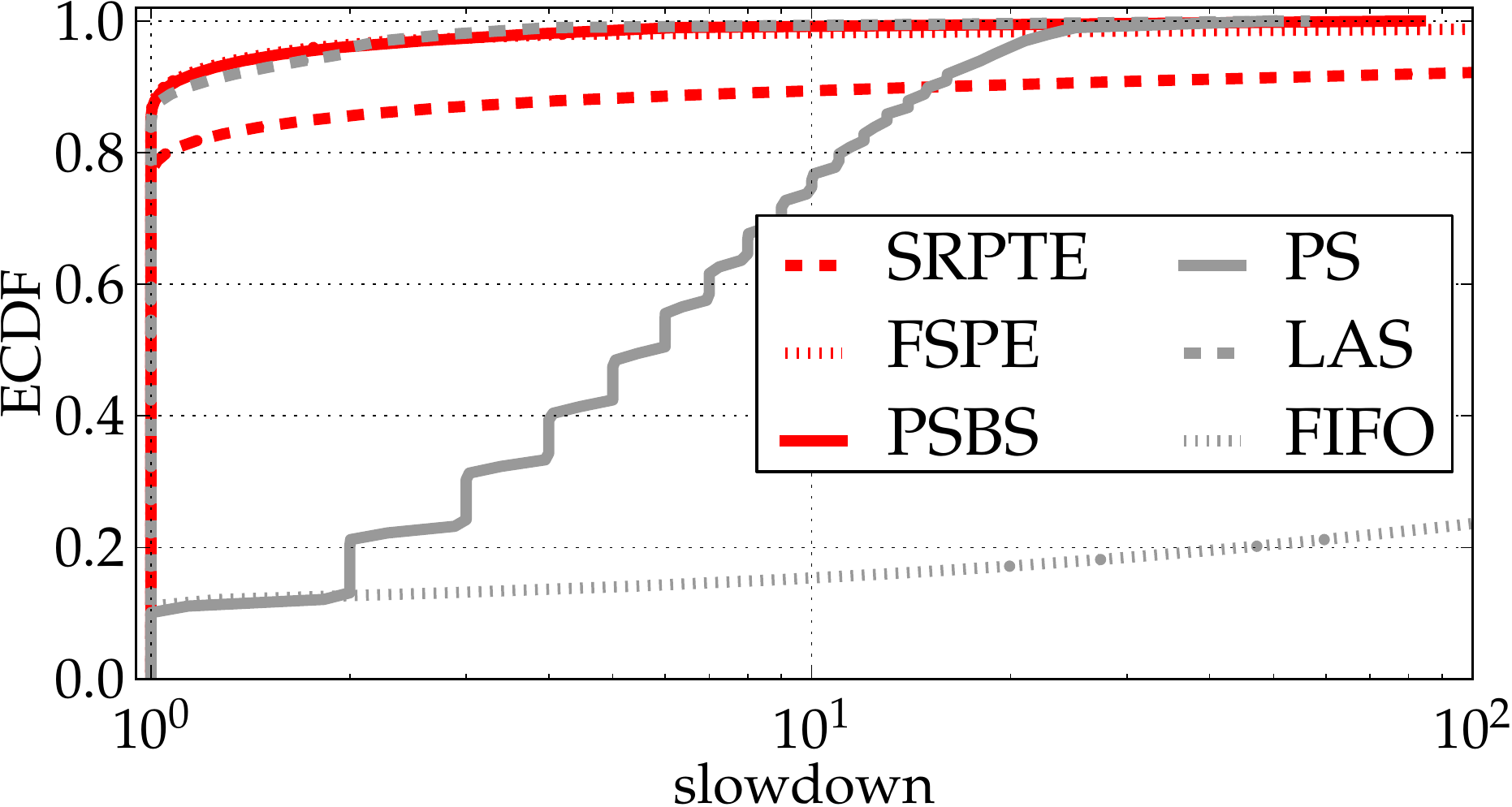}}
  \hfil
  \subfloat{\includegraphics[width=\plotwidth]{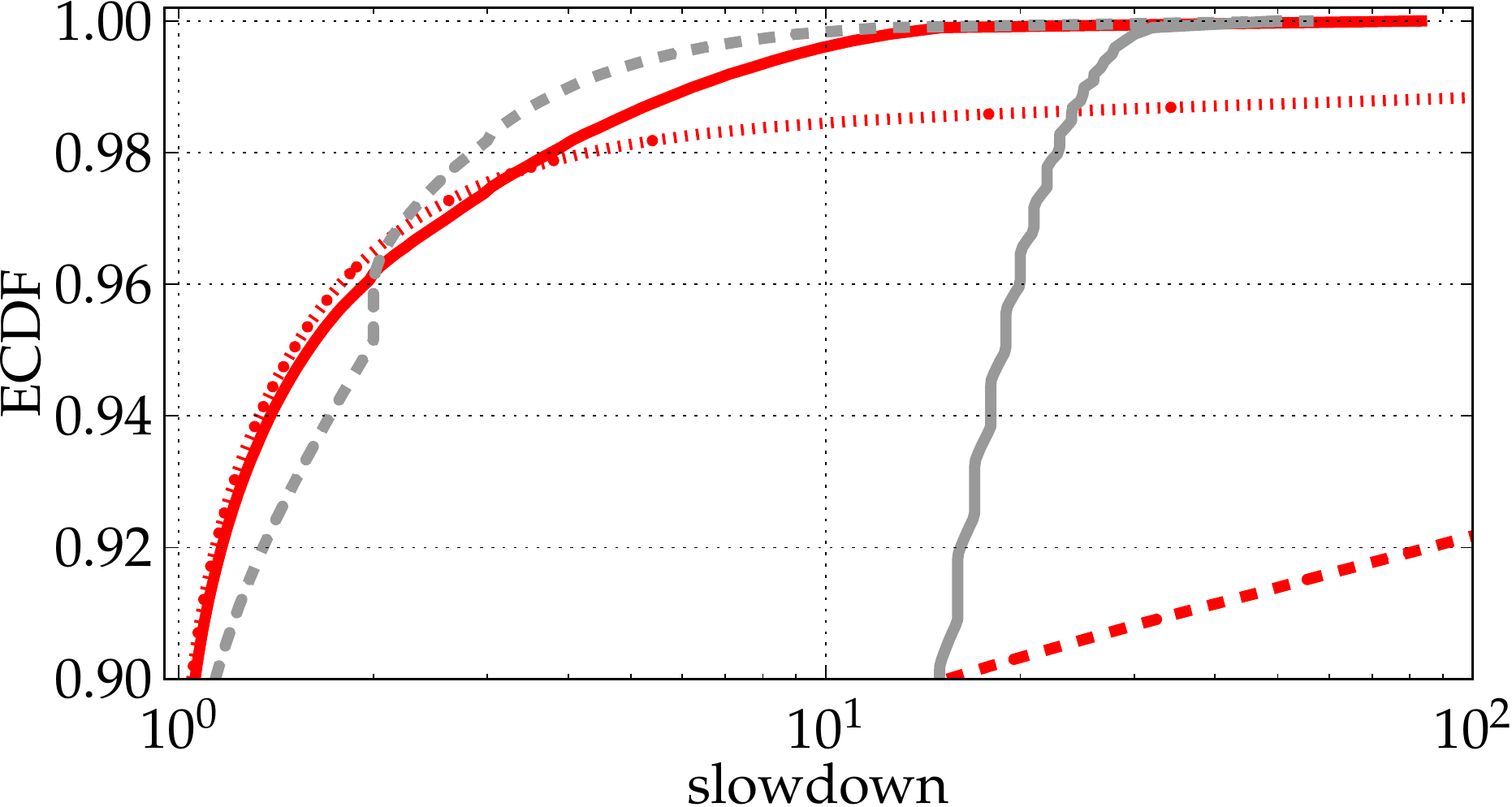}}
  \caption{Per-job slowdown: full CDF (top) and zoom on the 10\% more
    critical cases (bottom).}
  \label{fig:slowdown}
\end{figure}

\paragraph*{Per-Job Slowdown}
Our results testify that, for PSBS and similarly to
LAS, slowdown values are homogeneous across classes of job sizes:
neither small nor big jobs are penalized when using PSBS. This is a
desirable result, but the reported results are still averages:
to ensure that sojourn time is commensurate to size \emph{for
  all jobs}, we need to investigate the \emph{per-job} slowdown distribution.

In Fig.~\ref{fig:slowdown}, we plot the CDF of per-job slowdown for
our default parameters.  By serving efficiently smaller
jobs, size-based scheduling techniques and LAS manage to obtain an
optimal slowdown of 1 for most jobs. However, some jobs
experience very high slowdown: those with slowdown larger than
100 are around 1\% for FSPE and around 8\% for SRPTE.
PS, LAS, and PSBS perform well in terms of fairness, with no jobs
experiencing slowdown higher than 100 in our experiment
runs.\footnote{Fig.~\ref{fig:slowdown} plots the results of 121
  experiment runs, representing therefore 1,210,000 jobs in this
  simulation.} While PS is generally considered the reference for a
``fair'' scheduler, it obtains slightly better slowdown than LAS and
PSBS only for the most extreme cases, while being outperformed for
all other jobs.

\subsection{Job Weights}
\label{sec:priority}

\begin{figure*}[th]
   \centering
   \subfloat[$\text{shape}=0.25$]{
     \includegraphics[width=\threeplotwidth]{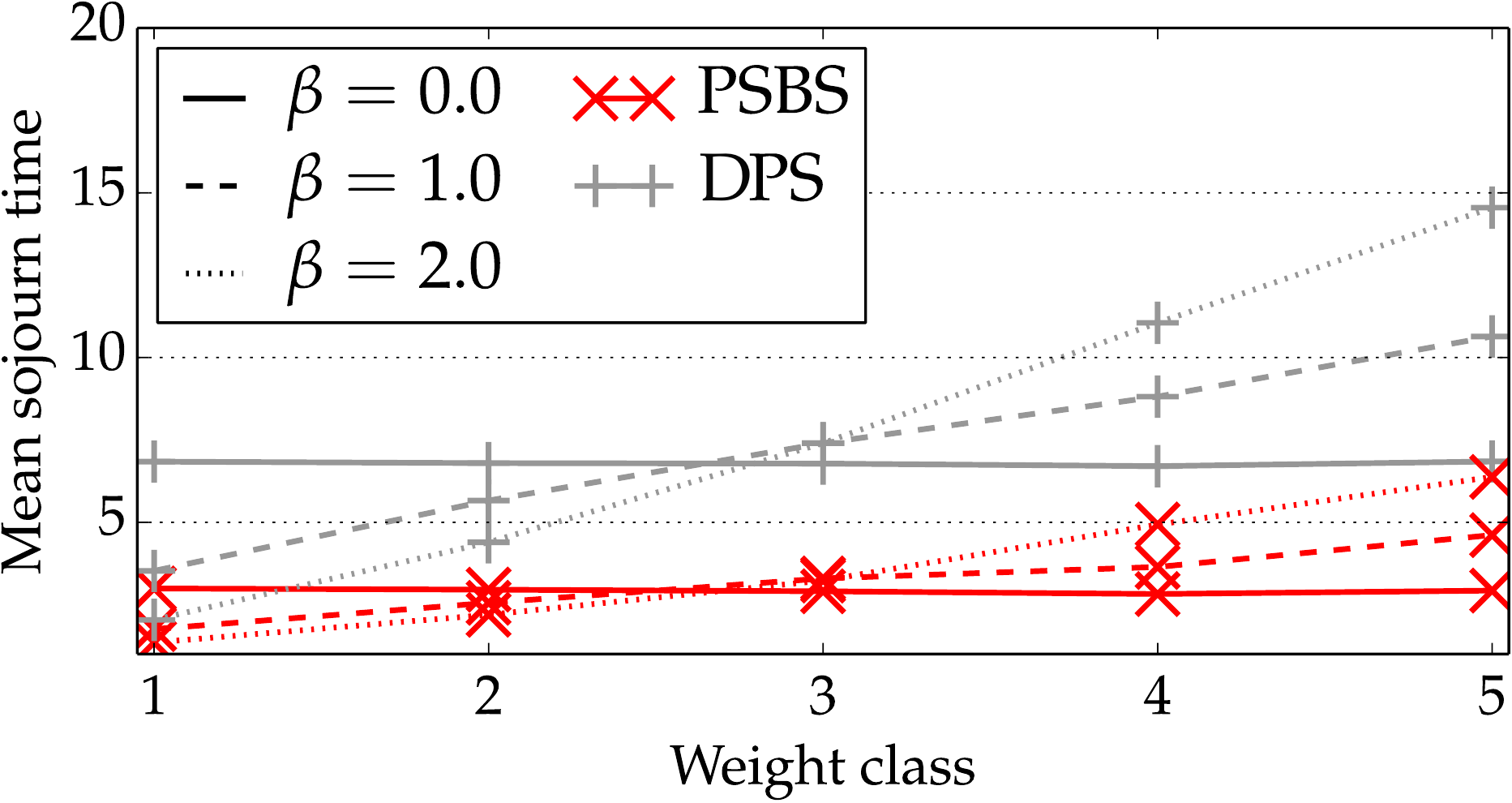}
   }
   \subfloat[$\text{shape}=1$]{
     \includegraphics[width=\threeplotwidth]{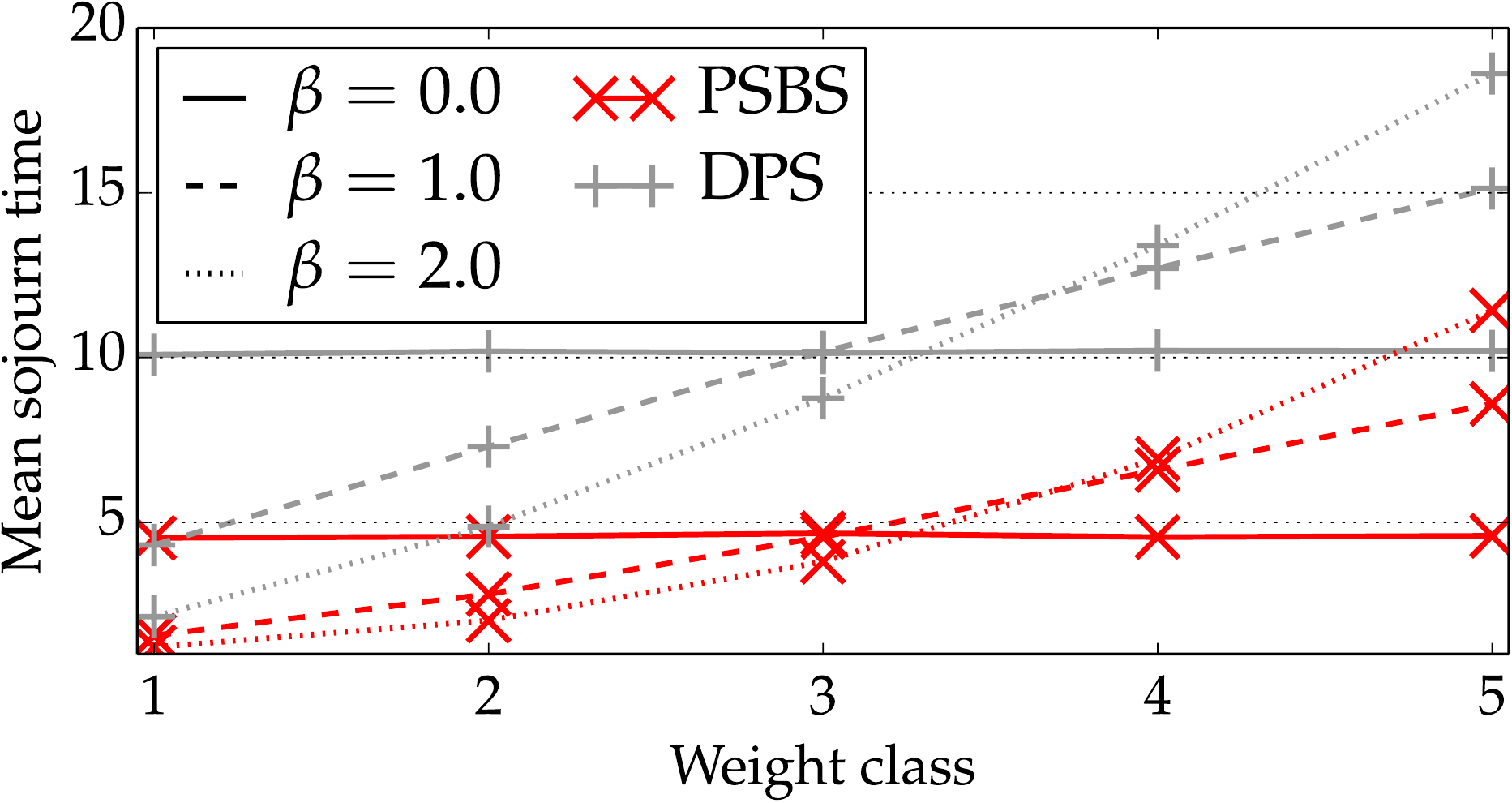}
   }
   \subfloat[$\text{shape}=4$]{
     \includegraphics[width=\threeplotwidth]{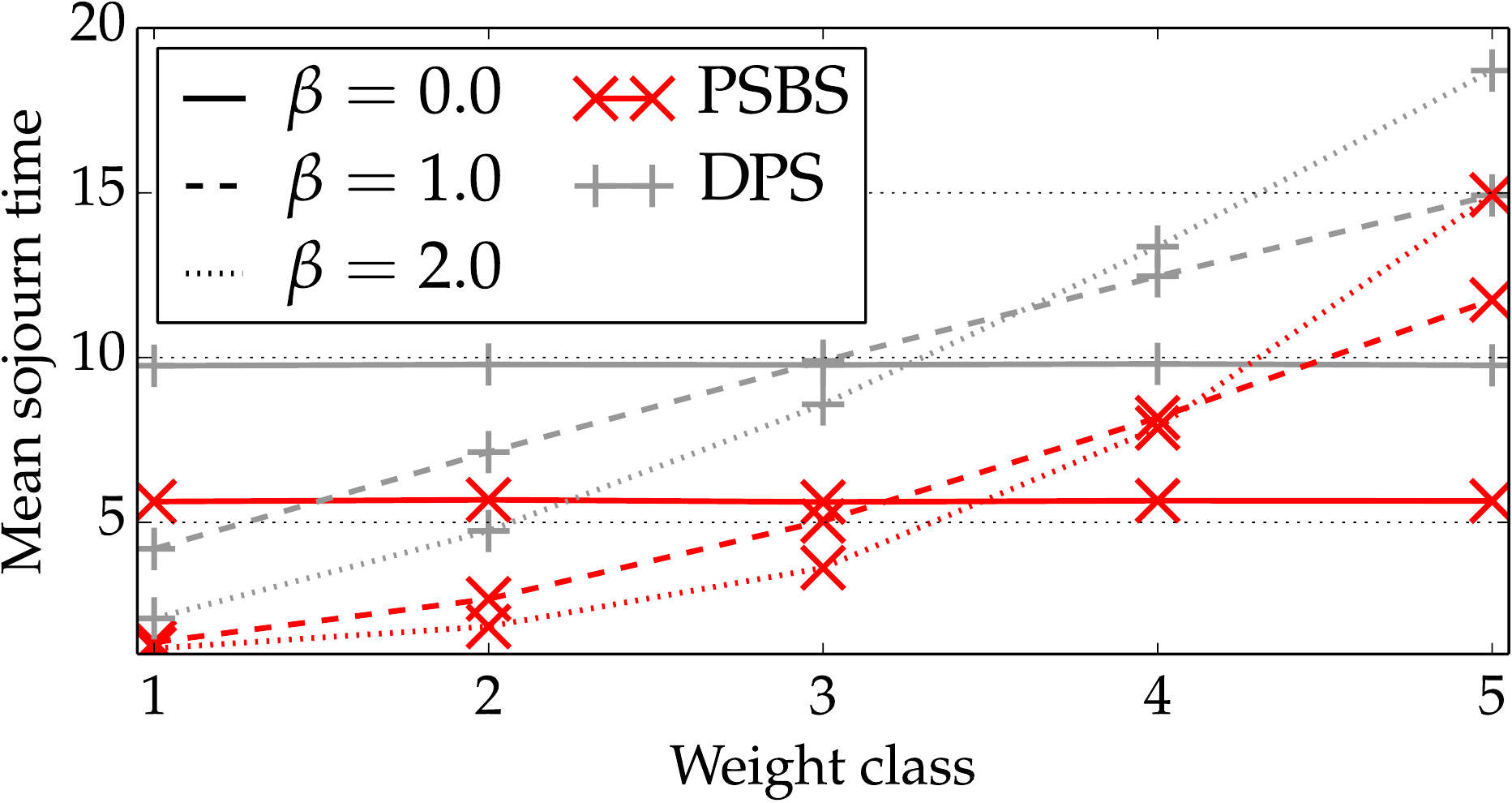}
   }
   \caption{Using weights to differentiate jobs: PSBS outperforms DPS.}
   \label{fig:priority}
\end{figure*}

We now consider how PSBS handles job weights. We consider workloads
generated with all the default values shown in
Table~\ref{table:parameters}.
Since we are not aware of representative workloads where job
priorities \emph{and} job sizes are known together, we resort to a
simple uniform distribution.
We randomly assign jobs to different weight
classes numbered from 1 to 5 with uniform probability: 
a job $i$ in weight class
$c_i$ has weight $w_i=1/c_i^\beta$, where $\beta \geq 0$ is a
parameter that allows us to tune how much we want to skew scheduling
towards favoring high-weight jobs. A $\beta = 0$ value corresponds
to uniform weights, $w_i=1$ for each job; as $\beta$ grows,
job weights differentiate so that more and more resources
are assigned to high-weight jobs.

In Fig.~\ref{fig:priority}, we plot the mean sojourn time for jobs
in each weight class. Jobs have a mean size of 1:
therefore, the best MST obtainable would be 1, which corresponds to
the bottom of the graph. We compare the results of
PSBS with those obtained by generalized processor sharing (DPS) while
using the same weights.

For workloads ranging between heavily skewed ($\text{shape}=0.25$) to
close to uniform ($\text{shape}=4$), PSBS outperforms DPS. Obviously,
$\beta=0$ leads to uniform MST between weight classes; raising the
values of $\beta$ improves the performance of high-weight jobs to
the detriment of low-weight ones. When $\beta=2$, the MST of jobs
in class 1 is already very close to the optimal value of 1; we do not
consider values of $\beta>2$ because it would impose performance
losses to low-weight jobs without significant benefits to
high-weight ones. It is interesting to point out that the trade-off
due to the choice of $\beta$ is not uniform across values of shape:
when the workload is close to uniform ($\text{shape}=4$), improvements
in sojourn times for high-weight jobs are quantitatively similar to
the losses paid by low-weight ones; this is because high-weight
jobs are likely to preempt low-weight ones with similar
sizes. Conversely, with heavily skewed workloads ($\text{shape}=0.25$)
sojourn time improvements for high-weight jobs are smaller than
losses for low-weight ones: this is because, in skewed workloads,
large high-weight jobs are likely to preempt small low-weight
ones: this results in small improvements in sojourn time for the
high-weight jobs, counterbalanced by large losses for the
low-weight ones.

\subsection{Other Settings}
\label{sec:other_settings}

\begin{figure}[!t]
  \centering
  \subfloat[$\alpha=2$]{
    \includegraphics[width=\plotwidth]{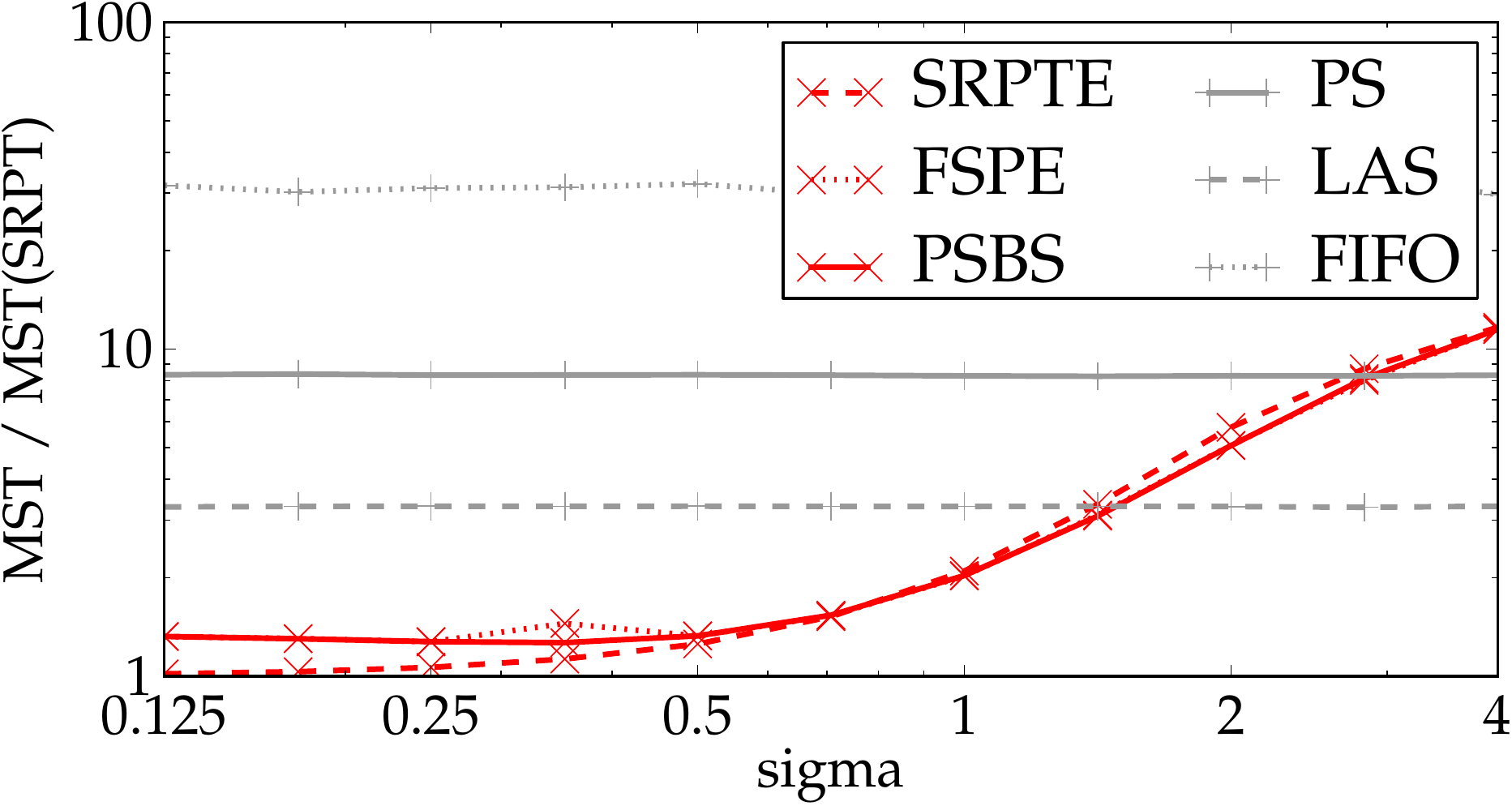}
    \label{fig:pareto2}
  }
  \hfil
  \subfloat[$\alpha=1$]{
    \includegraphics[width=\plotwidth]{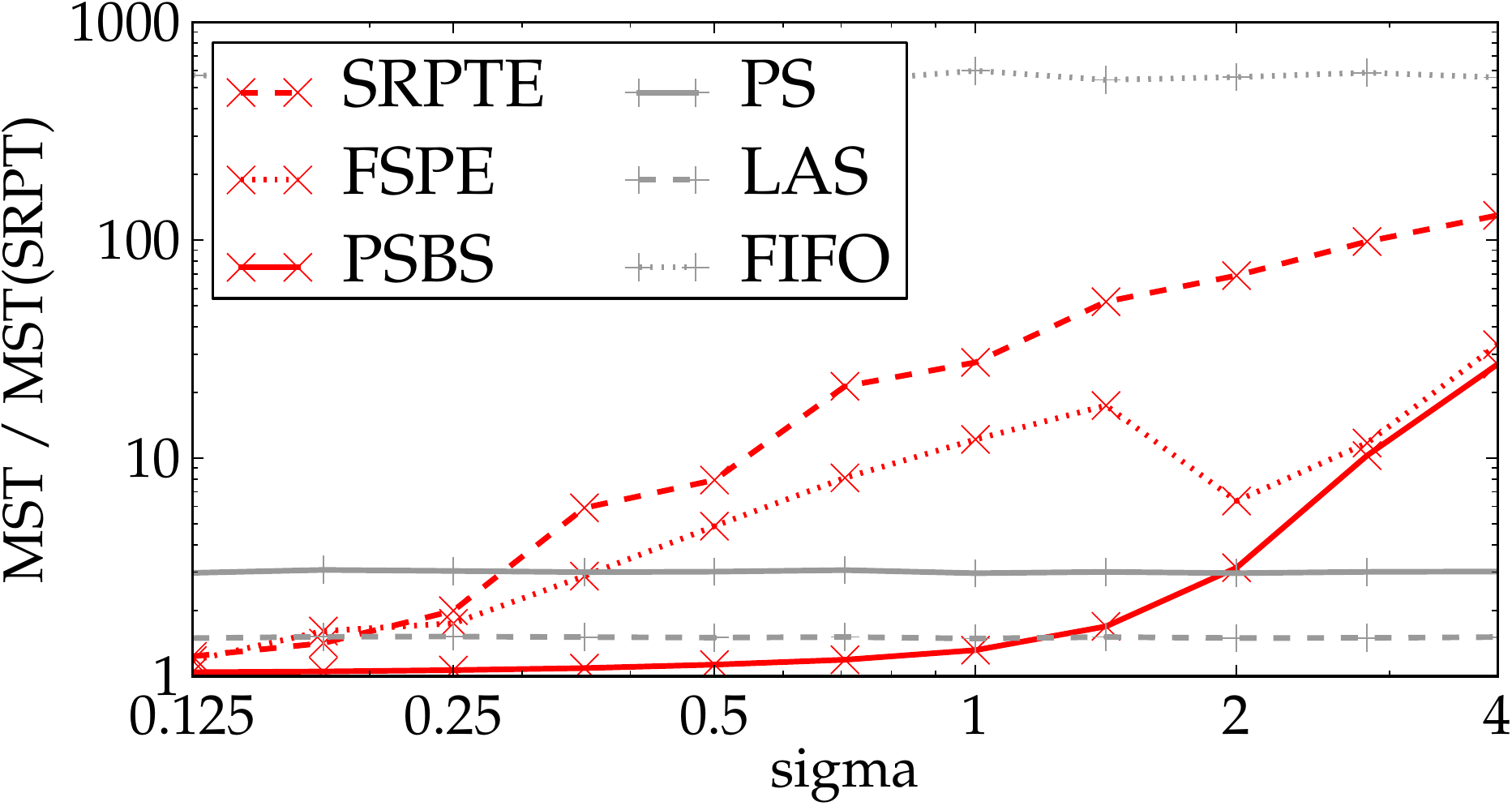}
    \label{fig:pareto1}
  }
  \caption{Pareto job size distributions, sorted by growing skew.}
  \label{fig:pareto}
\end{figure}

Until here, we focused on the sigma and shape parameters, because they
are the ones that we found out to have the most influence on scheduler
behavior. We now examine the impact of other settings
that deviate from our defaults.

\paragraph*{Pareto Job Size Distribution}
In the literature, workloads are
often generated using the Pareto distribution. To help comparing our
results to the literature, in Fig.~\ref{fig:pareto} we show results
for job sizes having a Pareto distribution, using $x_m=0$ and
$\alpha=\{1,2\}$. The results we observe for the Weibull distribution
are still qualitatively valid for the Pareto distribution; the value
of $\alpha=1$ is roughly comparable to a shape of 0.15 for the Weibull
distribution, while $\alpha=2$ is comparable to a shape of around 0.5,
where the three size-based disciplines we take into account still have
similar performance.

\paragraph*{Impact of Other Parameters}
We have studied the impact of other parameters, such as the 
load, the timeshape and the njobs, and the results are consistent with the
ones showed in the previous sections.
The interested reader can find the details in the supplemental material.

\subsection{Real Workloads}
\label{sec:real_workloads}

We now consider two real workloads to confirm that the
phenomena we observed are not an artifact of the
synthetic traces that we generated, and that they indeed apply in
realistic cases. From the traces we obtain two data points per job:
submission time and job size. In this way, we move away from the
assumptions of the $GI/GI/1$ model, and we provide results that can
account for more general cases where periodic patterns and correlation
between job size and submission times are present.

\begin{figure}[!t]
  \centering
  \includegraphics[width=\plotwidth]{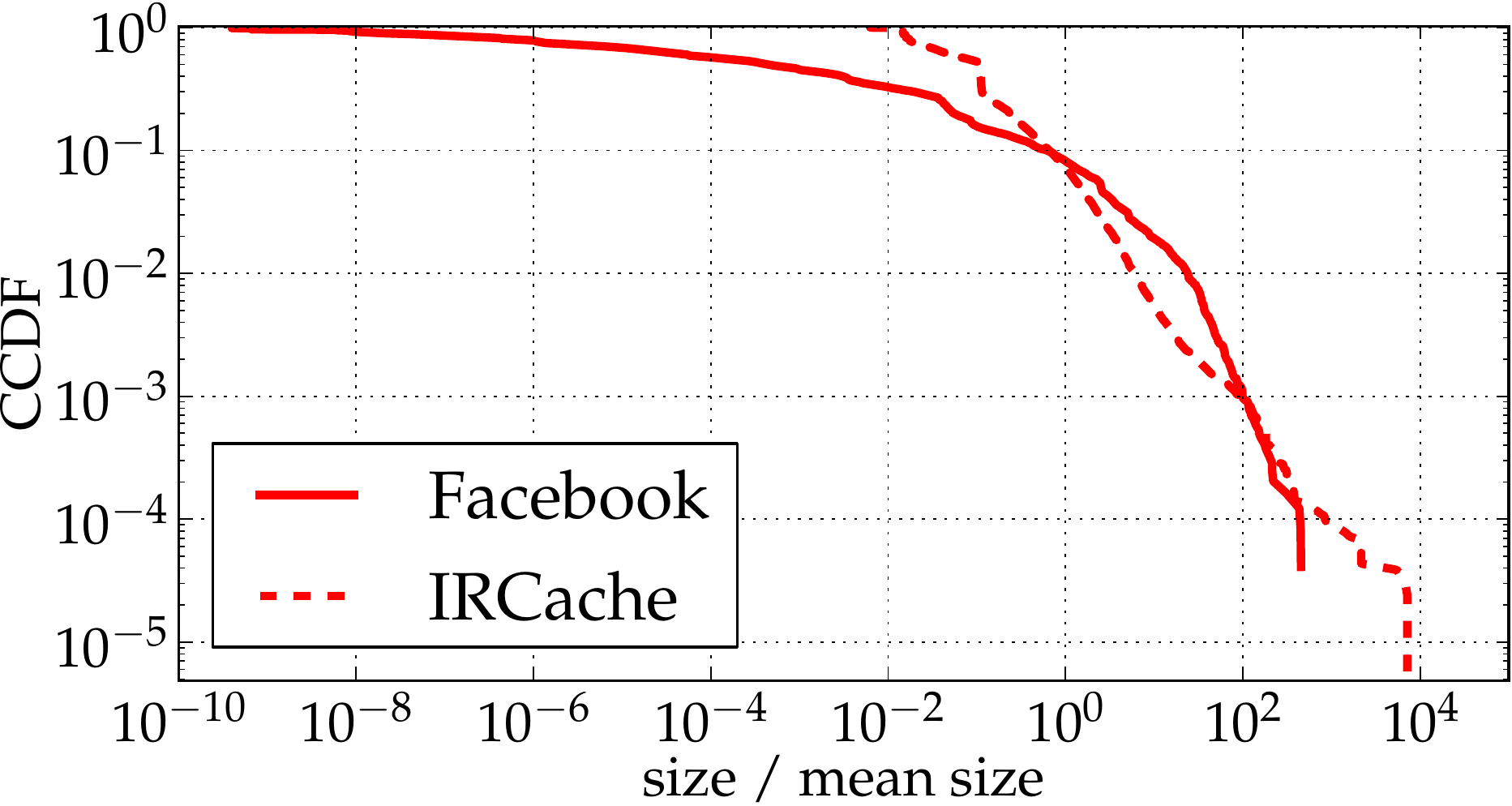}
  \caption{CCDF for the job size of real workloads.}
  \label{fig:ccdf}
\end{figure}

\paragraph*{Hadoop at Facebook}
We consider a trace from a Facebook Hadoop
cluster in 2010, covering one day of job submissions. The trace has
been collected and analized by \citeN{chen2012interactive};
it is comprised of 24,443 jobs and it is available
online.\footnote{\url{https://github.com/SWIMProjectUCB/SWIM/blob/master/workloadSuite/FB-2010_samples_24_times_1hr_0.tsv}}
For the purposes of this work, we consider the job size as the number
of bytes handled by each job (summing input, intermediate output and
final output): the mean size is 76.1 GiB, and the largest job
processes 85.2 TiB. To understand the shape of the tail for the job
size distribution, in Fig.~\ref{fig:ccdf} we plot the complementary
CDF (CCDF) of job sizes (normalized against the mean); the
distribution is heavy-tailed and the largest jobs are around 3 orders
of magnitude larger than the average size. For homogeneity with the
previous results, we set the processing speed of
the simulated system (in bytes per second) in order to obtain a load
(total size of the submitted jobs divided by total length of the
submission schedule) of 0.9.

\begin{figure}[!t]
  \centering
  \includegraphics[width=\plotwidth]{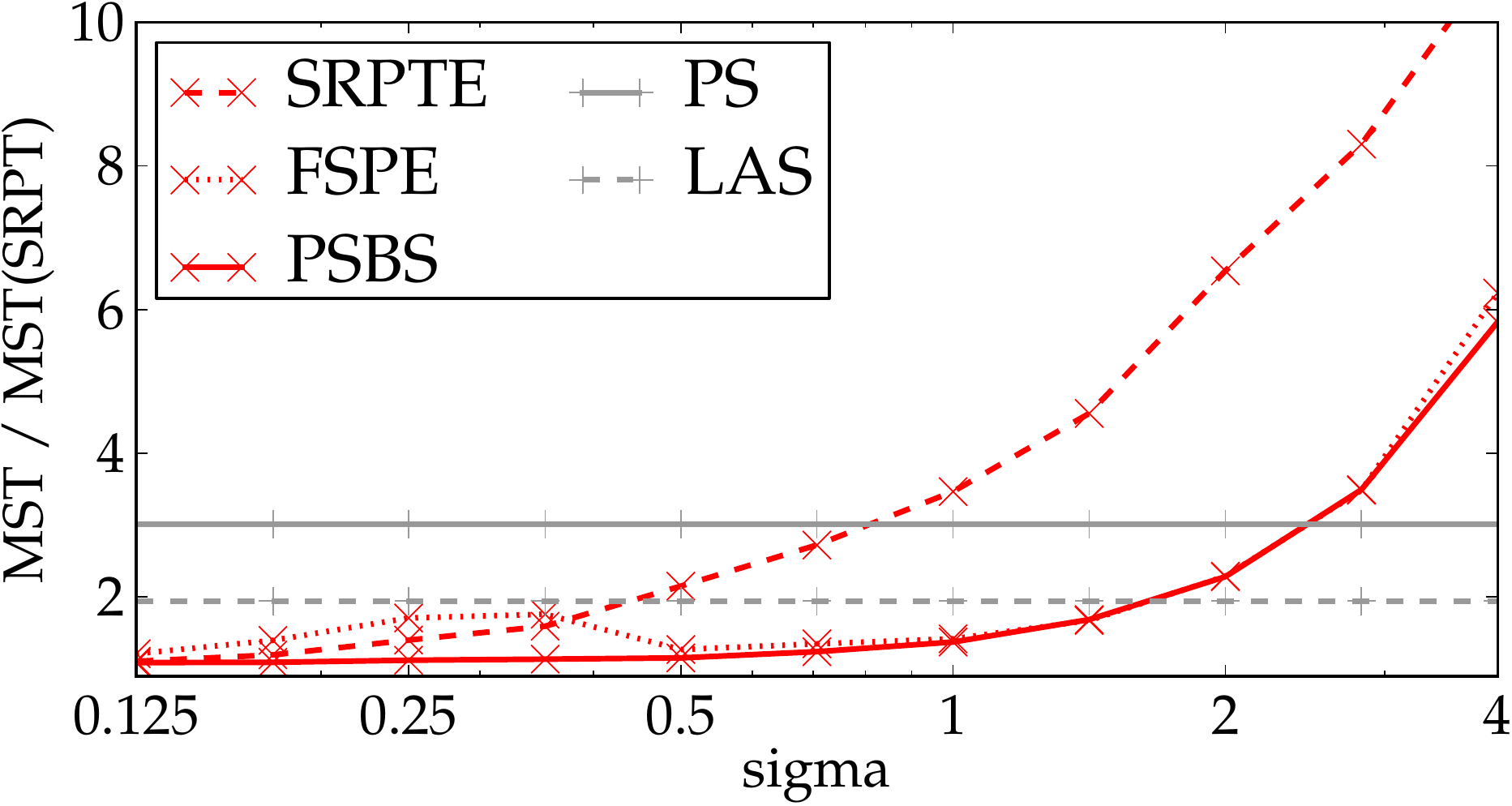}%
  \caption{MST of the Facebook workload.}
  \label{fig:fb}
\end{figure}

In Fig.~\ref{fig:fb}, we show MST, normalized against optimal MST,
while varying the error rate. These results are very
similar to those in Fig.~\ref{fig:sigma}: once again,
FSPE and PSBS perform well even when job size estimation
errors are far from negligible. 
These results show that
this case is well represented by our synthetic workloads, when
shape is around 0.25.

We performed more experiments on these traces; extensive results are
available in a technical report~\cite{dell2013simulator}.

\paragraph*{Web Cache}
IRCache\footnote{\url{http://ircache.net}} is a research
project for web caching; traces from the caches are freely
available. We performed our experiments on a one-day trace of a server
from 2007 totaling 206,914
requests;\footnote{\url{ftp://ftp.ircache.net/Traces/DITL-2007-01-09/pa.sanitized-access.20070109.gz}.}
the mean request size in the traces is 14.6KiB, while the maximum
request size is 174 MiB. In Fig. \ref{fig:ccdf} we show the CCDF of
job size; as compared to the Facebook trace analyzed previously, the
workload is more heavily tailed: the biggest requests are four orders
of magnitude larger than the mean. As before, we set the simulated
system processing speed in bytes per second to obtain a load of 0.9.

\begin{figure}[!t]
  \centering
  \includegraphics[width=\plotwidth]{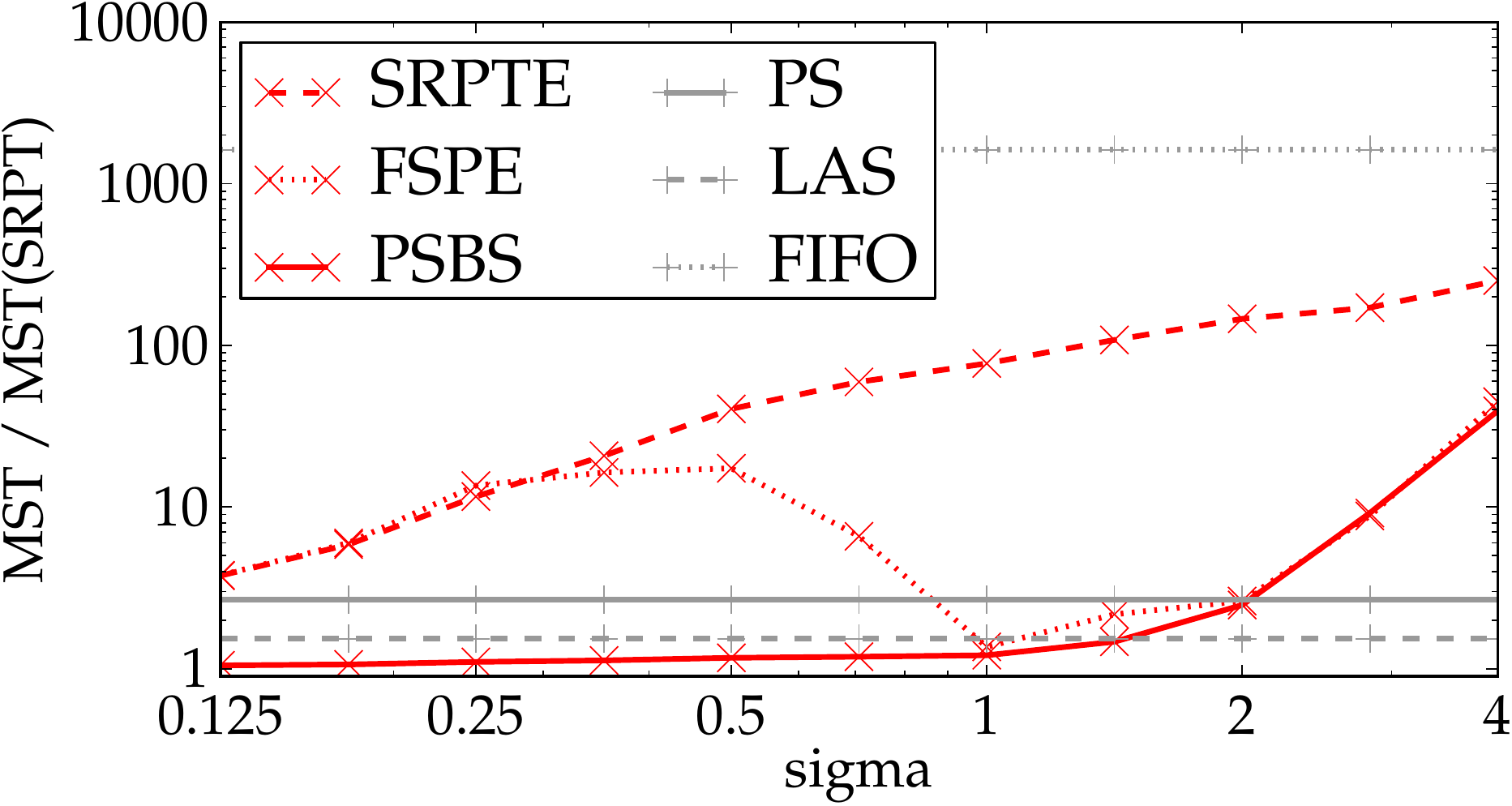}
  \caption{MST of the IRCache workload.}
  \label{fig:webserver}
\end{figure}

In Fig.~\ref{fig:webserver} we plot MST as the
sigma parameter controlling error varies. Since the job size
distribution is heavy-tailed, sojourn times are more influenced
by job size estimation errors (notice the logarithmic scale on the $y$
axis), confirming the results we have from Fig.~\ref{fig:3d}. The
performance of FSPE does not worsen monotonically as error grows, but
rather becomes better for $0.5 < \text{sigma} < 1$; this is a
phenomenon that we also observe -- albeit to a lesser extent -- for
synthetic workloads in Fig.~\ref{fig:shape_0177} and for the
Facebook workload in Fig.~\ref{fig:fb}. The explanation
provided in Section~\ref{sec:sigma} applies: since the mean of the
log-normal distribution grows as sigma grows, the aggregate amount of
work for a given set of jobs is likely to be over-estimated in total,
reducing the likelihood that several jobs at once become late and
therefore non-preemptable. Also here, we still remark that
PSBS consistently outperforms FSPE.

\section{Conclusion}
\label{sec:conclusion}

This work shows that size-based scheduling is an applicable and
performant solution in a wide variety of situations where job size is
known approximately. Limitations shown by
previous work are, in a large part, solved by the approach we took for
PSBS; analogous measures can
be taken in other preemtpive size-based disciplines.

PSBS is a generalization of FSP, and we have proven analytically that,
in the absence of errors, it dominates DPS; 
to the best of our knowledge,
PSBS is also the first $O(\log n)$ implementation of FSP.

With PSBS, system designers do not need to worry about the problems
created by job size under-estimations.
PSBS also solves a fairness problem: while FSPE and SRPTE penalize
small jobs and results in slowdown values which are not proportionate
to their size, PSBS has an optimal slowdown equal to 1 for most
small jobs.

We maintain that, thanks to its efficient implementation, solid
performance in case of estimation errors, and support for job 
weights, PSBS is a \emph{practical} size-based policy that can
guide the design of schedulers in real, complex systems. 
We argue that it is worthy to try size-based scheduling, even if 
inaccurate estimates can be produced to
estimate job sizes: our
proposal, PSBS, is reasonably easy to implement
and provides close to optimal response times and good fairness in all
but the most extreme of cases.

We released our simulator as free software; it can be reused for:
\begin{inparaenum}[(\itshape i)]
\item reproducing our experimental results; 
\item prototyping new scheduling algorithms;
\item predicting system
behavior in particular cases, by replaying traces.
\end{inparaenum}



\bibliographystyle{IEEEtranN}
{\footnotesize
\bibliography{references}
}

%

\begin{IEEEbiography}[{\includegraphics[width=1in,height=1.25in,clip,keepaspectratio]{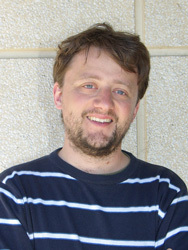}}]{Matteo Dell'Amico}
is a researcher at Symantec Research Labs; his research revolves on
the topic of distributed computing.  He received his M.S. (2004) and
Ph.D. (2008) in Computer Science from the University of Genoa (Italy);
during his Ph.D. he also worked at University College London. Between
2008 and 2014 he was a researcher at EURECOM. His research interests
include data-intensive scalable computing, peer-to-peer systems,
recommender systems, social networks, and computer security.
\end{IEEEbiography}

\begin{IEEEbiography}[{\includegraphics[width=1in,height=1.25in,clip,keepaspectratio]{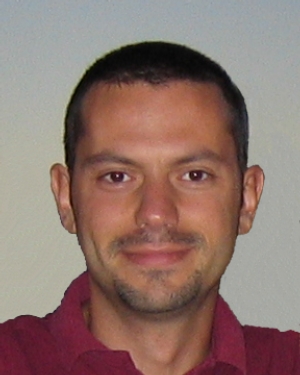}}]{Damiano Carra}
received his Laurea in Telecommunication Engineering from Politecnico di Milano, and his Ph.D. in Computer Science from University of Trento. He is currently an Assistant Professor in the Computer Science Department at University of Verona. His research interests include modeling and performance evaluation of peer-to-peer networks and distributed systems.
\end{IEEEbiography}

\begin{IEEEbiography}[{\includegraphics[width=1in,height=1.25in,clip,keepaspectratio]{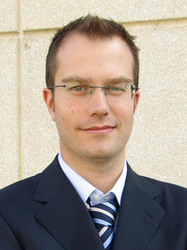}}]{Pietro Michiardi}
received his M.S. in Computer Science from EURECOM and his M.S. in Electrical Engineering from Politecnico di Torino. Pietro received his Ph.D. in Computer Science from Telecom ParisTech (former ENST, Paris). Today, Pietro is a Professor of Computer Science at EURECOM.
Pietro currently leads the Distributed System Group, which blends theory and system research focusing on large-scale distributed systems (including data processing and data storage), and scalable algorithm design to mine massive amounts of data. Additional research interests are on system, algorithmic, and performance evaluation aspects of computer networks and distributed systems. 
\end{IEEEbiography}

\newpage

\appendices
\section{Supplemental material}

\subsection{Simulator Implementation Details}

Our simulator is available under the Apache V2
license.\footnote{\url{https://github.com/bigfootproject/schedsim}}
It has been conceived with ease of prototyping in mind: for example,
our implementation of FSPE as described in Section~\ref{sec:errors}
requires 53 lines of code. Workloads can be both replayed from real
traces and generated synthetically.

The simulator has been written with a focus on computational
efficiency. It is implemented using an event-based paradigm, and we
used efficient data structures based on B-trees.\footnote{\url{http://stutzbachenterprises.com/blist/}}
As a result of these choices, a ``default'' workload of 10,000 jobs is
simulated in around half a second, while using a single core in our
2011 laptop with an Intel T7700 CPU. We use IEEE 754 double-precision
floating point values to represent time and job sizes.

\subsection{Additional experiments}

\begin{figure}[!b]
  \centering
  \subfloat[Varying load.]{
    \includegraphics[width=\plotwidth]{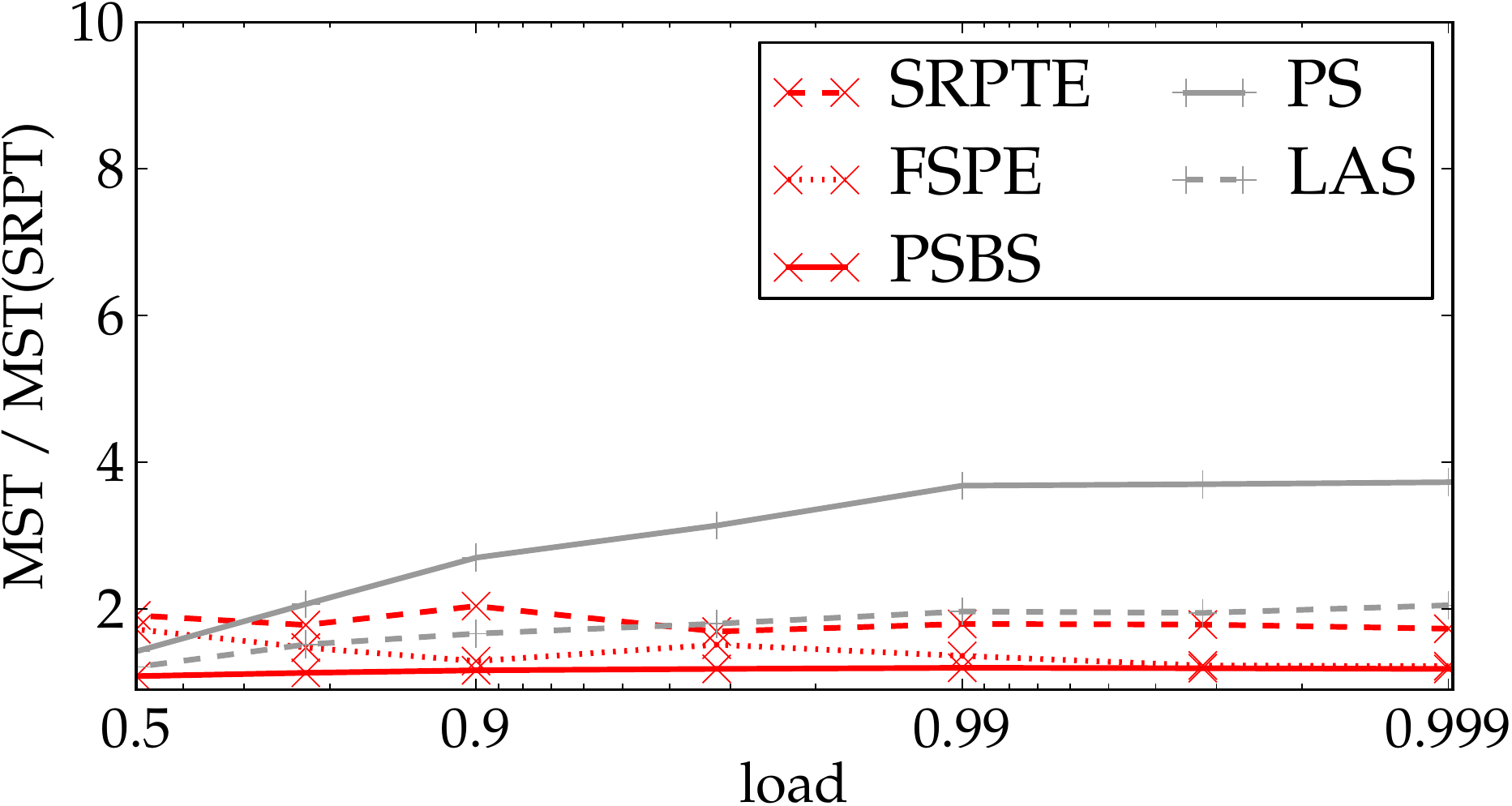}
    \label{fig:load}
  }
  \hfil
  \subfloat[Varying timeshape.]{
    \includegraphics[width=\plotwidth]{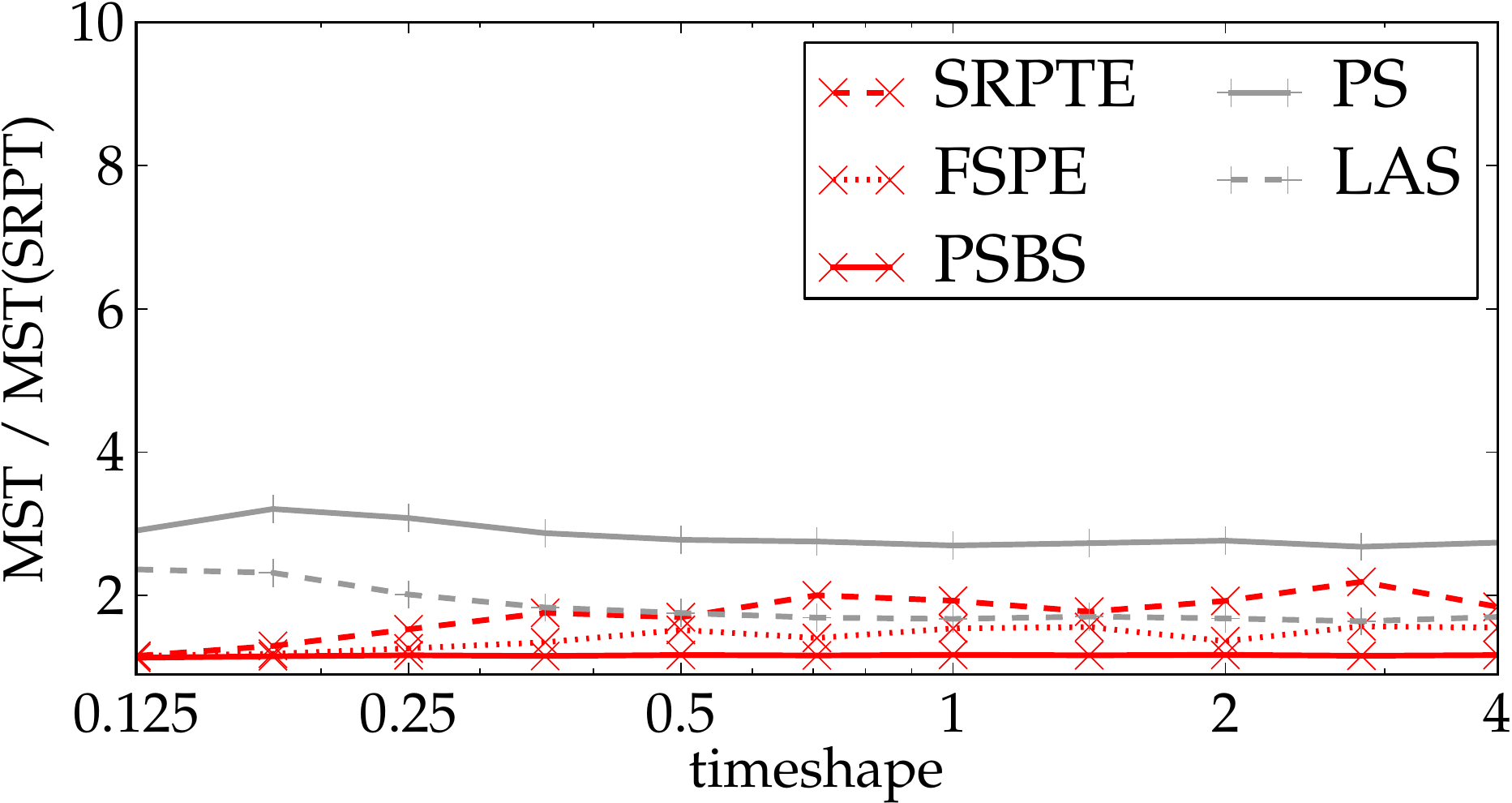}
    \label{fig:timeshape}
  }
  \caption{Impact of load and timeshape.}
  \label{fig:sim}
\end{figure}

In Fig.~\ref{fig:sim}, we show the impact of load and
timeshape, keeping sigma and shape at their default
values. 
Fig.~\ref{fig:load} shows that performance of size-based scheduling
protocols is not heavily impacted by load, as the ratio between the
MST obtained and the optimal one remains roughly constant (note that
the graph shows a ratio, and not the absolute values which increase as the 
load increases); conversely,
size-oblivious schedulers such as PS and LAS deviate more from optimal
as the load grows.
Fig.~\ref{fig:timeshape} shows the impact of changing the timeshape
parameter: with low values of timeshape, job submissions are bursty
and separated by long pauses; with high values job submissions are
more evenly spaced. We note that size-based scheduling policies respond
very well to bursty submissions where several jobs are submitted at
once: in this case, adopting a size-based policy that focuses all the
system resources on the smallest jobs pays best; as the intervals
between jobs become more regular, SRPTE and FSPE become slightly less
performant; PSBS remains close to optimal.

\begin{figure}[htb]
  \centering
  \subfloat[Varying load.]{
    \includegraphics[width=\threeplotwidth]{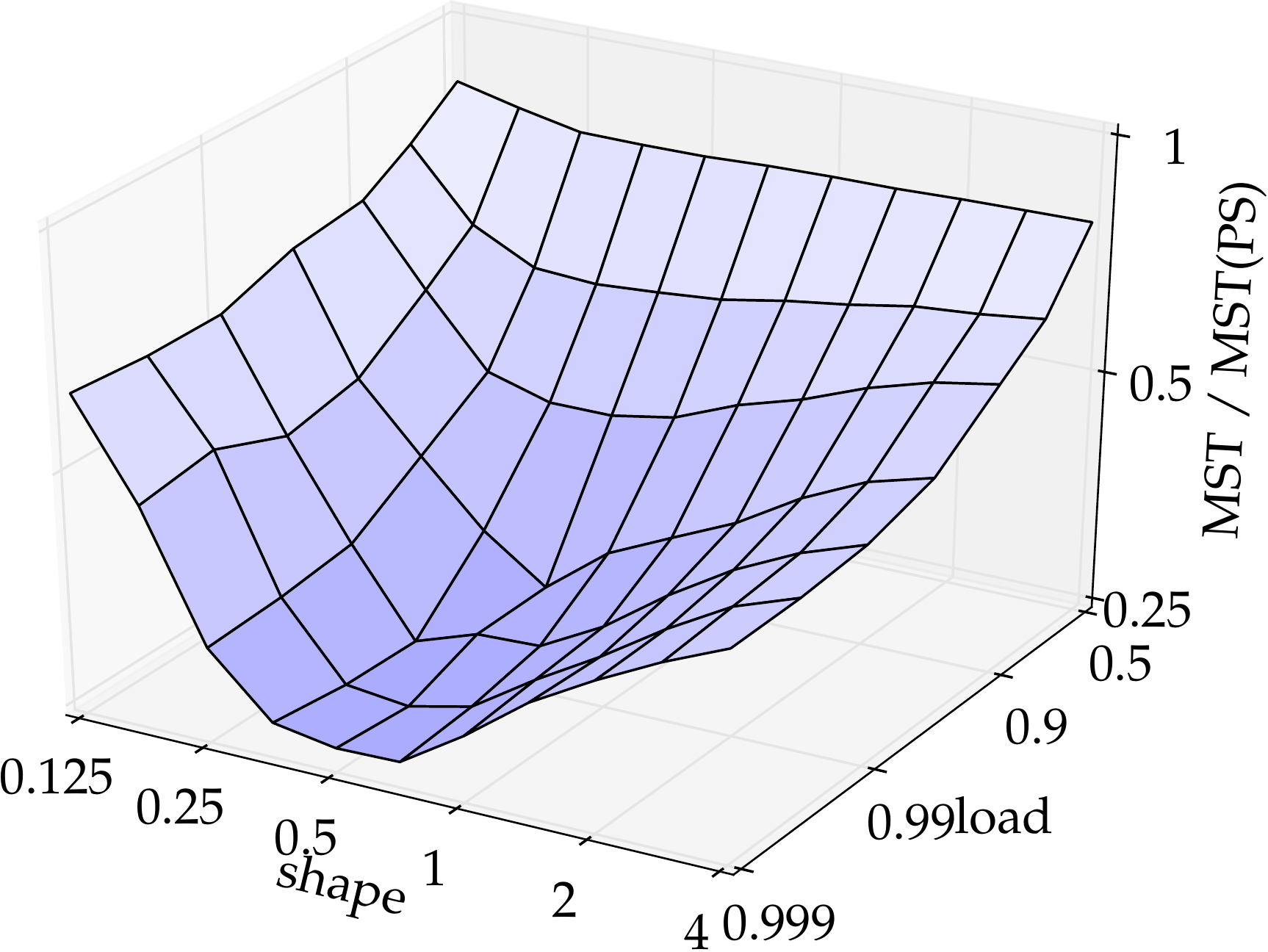}
  }
  \hfil
  \subfloat[Varying timeshape.]{
    \includegraphics[width=\threeplotwidth]{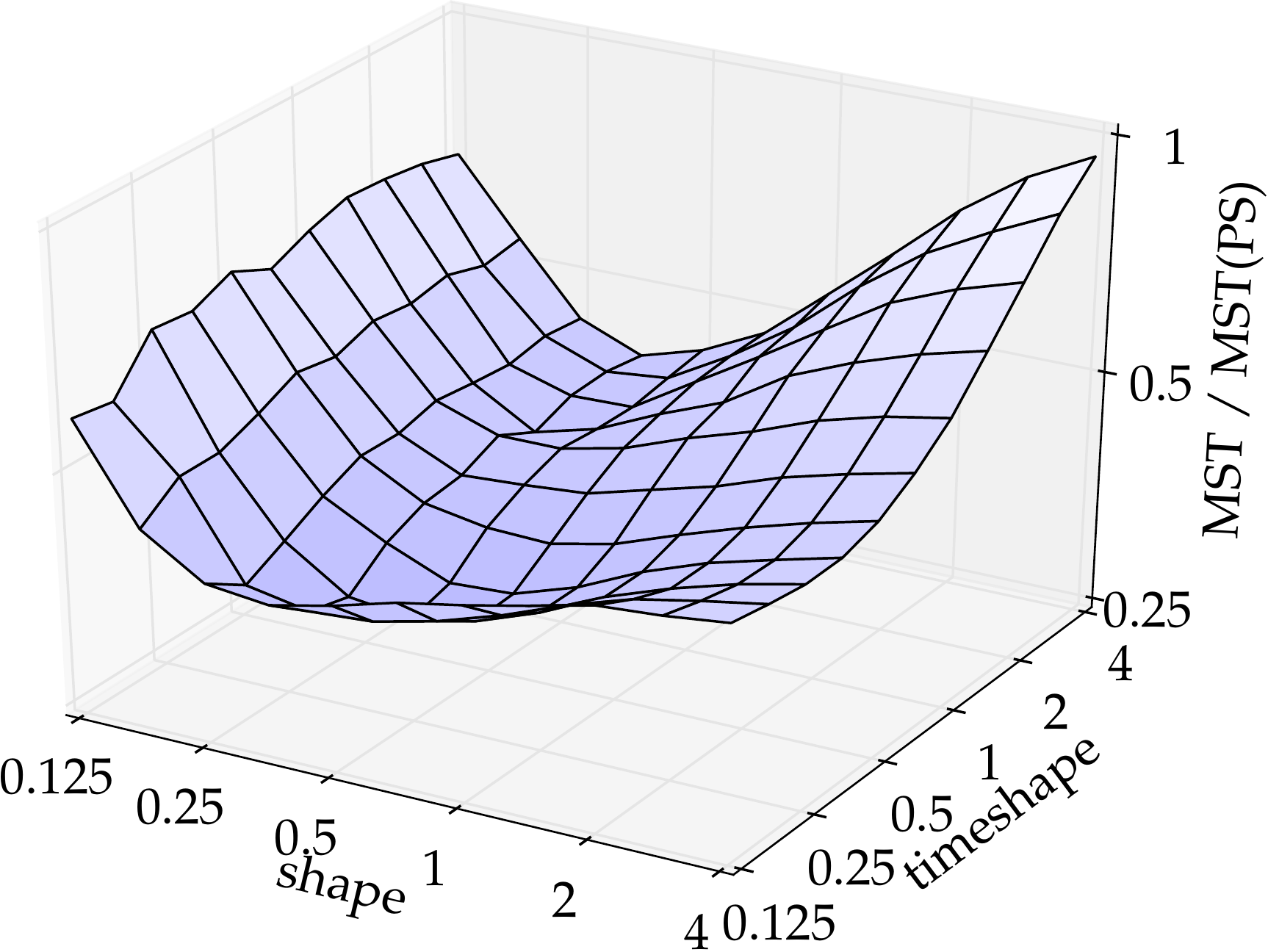}
  }
  \hfil
  \subfloat[Varying njobs.]{
    \includegraphics[width=\threeplotwidth]{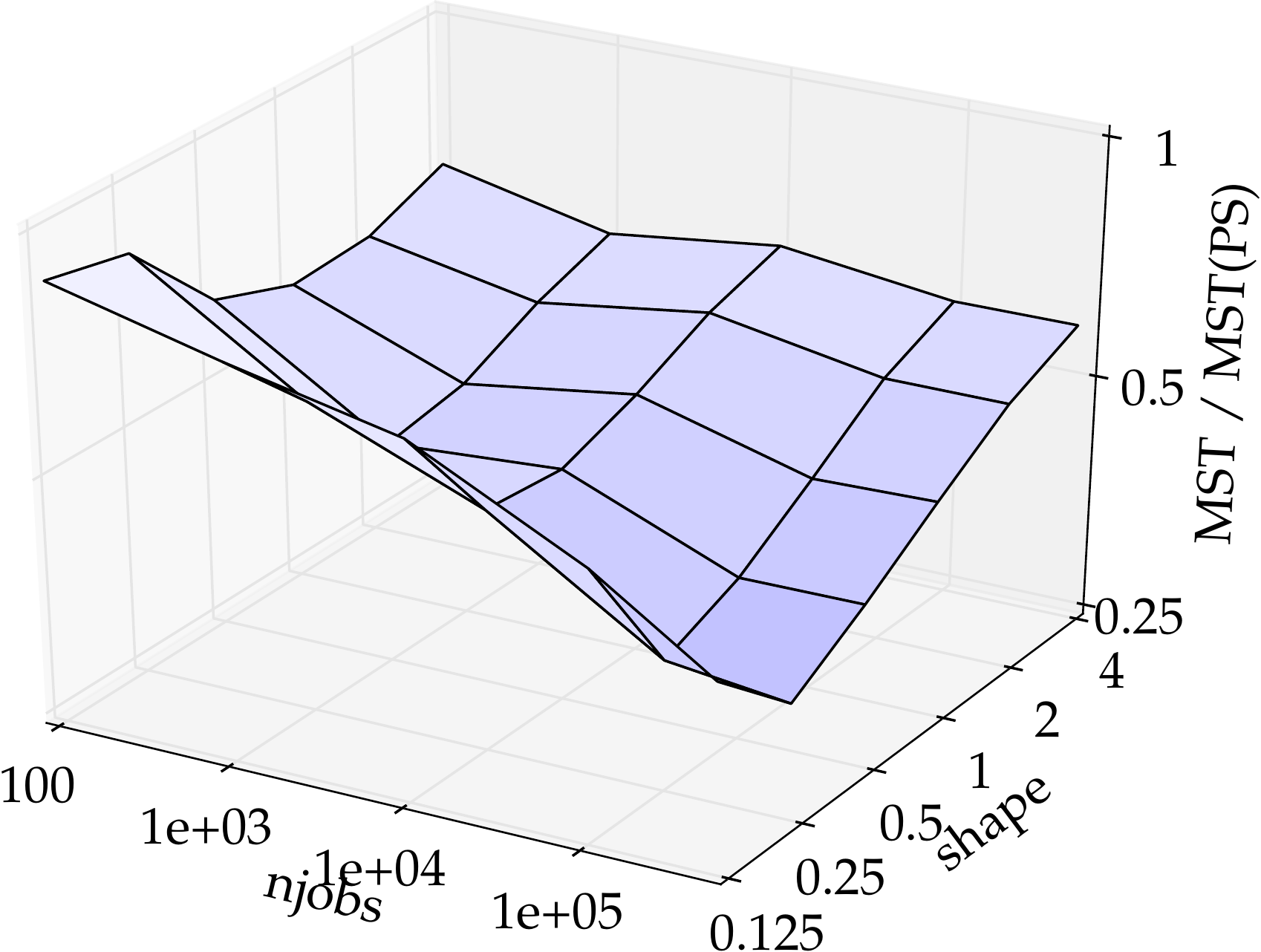}
  }
  \caption{PSBS against PS.}
  \label{fig:3d_fspe+ps_parameters}
\end{figure}


With Fig.~\ref{fig:3d_fspe+ps_parameters}, we show that the results
of Fig.~\ref{fig:sim} generalize to other parameter choices: by
letting shape vary together with load, timeshape and njobs, we notice
that PSBS always performs better than PS. The V-shaped pattern, where the
difference in performance between the two schedulers is larger for
``central'' values of the shape parameter is essentially caused by
PS performing closer to optimal for extreme values of the shape parameter, 
as we can see in Fig.~\ref{fig:shape}.

\end{document}